\documentclass[lettersize,journal，comsoc]{IEEEtran}
\usepackage{amsmath,amsfonts}
\usepackage{algorithmic}
\usepackage{algorithm}
\usepackage{array}
\usepackage[caption=false,font=normalsize,labelfont=sf,textfont=sf]{subfig}
\usepackage{textcomp}
\usepackage{stfloats}
\usepackage{url}
\usepackage{multirow}
\usepackage{booktabs}
\usepackage{verbatim}
\usepackage{graphicx}
\usepackage{cite}
\usepackage{color}
\usepackage{amsthm,amssymb}
\newtheorem{theorem}{Theorem}

\newtheorem{assumption}{Assumption}
\newtheorem{proposition}{Proposition}
\begin{document}

\title{Adversarial Water-Filling: Theory, Algorithms, and a Domain-Specific Wireless Foundation Model}

\author{Xindi Tong, Chee Wei Tan, H. Vincent Poor\thanks{
Xindi Tong and Chee Wei Tan are with the College of Computing and Data Science (CCDS), Nanyang Technological University, Singapore. 
H. Vincent Poor is the Michael Henry Strater University Professor in Princeton University, United States
E-mail: to0002di@e.ntu.edu.sg,  cheewei.tan@ntu.edu.sg, poor@princeton.edu}}
% \thanks{$^{*}$Corresponding author: Xindi Tong.}}
% \thanks{Manuscript received April 19, 2005; revised August 26, 2015.
% }

% The paper headers
\markboth{Journal of \LaTeX\ Class Files,~Vol.~14, No.~8, August~2021}%
{Shell \MakeLowercase{\textit{et al.}}: A Sample Article Using IEEEtran.cls for IEEE Journals}

% \IEEEpubid{0000--0000/00\$00.00~\copyright~2021 IEEE}
% Remember, if you use this you must call \IEEEpubidadjcol in the second
% column for its text to clear the IEEEpubid mark.

\maketitle

\begin{abstract}

Competitive resource allocation problems over frequency and space can be formulated as minimax interaction between transmit power and worst-case interference. This formulation naturally arises in multi-operator low Earth orbit (LEO) satellite spectrum sharing, where transmissions from competing constellations interfere in real-time. \color{black}
Under Gaussian channels, the corresponding power-allocation problem admits a convex--concave formulation with a unique saddle point. Discrete constellations yield generally nonconvex mercury/water-filling formulations.
\color{black}
\color{black}
In this paper we propose the adversarial water-filling (AWF) problem with corresponding theory and algorithms for these settings. In addition, we develop a domain-specific
\color{black}
wireless foundation model for AWF to learn the AWF search
dynamics. The architecture incorporates permutation-invariant
channel representations, a constraint-aware graph neural
network (GNN) with sparse message passing, and global latent
variables capturing the low-dimensional water level implied by
the AWF optimality. Through learned projected extragradient
iterations, the model approximates stationary solutions of the
constrained minimax problem arising under mercury/water-filling.
\color{black}
We further establish projected-stationarity/Karush-Kuhn-Tucker consistency and
conditional local convergence of the learned AWF dynamics under
local regularity and stability conditions.
Experiments demonstrate empirical generalization across unseen
problem sizes, constraint structures, discrete constellations, and
channel-weighted objectives, while achieving a median speedup exceeding one order of magnitude over Mirror-Prox on matched instances at comparable first-order solution quality. \color{black} The related code can be found at https://github.com/convexsoft/AWF. 

\end{abstract}

\begin{IEEEkeywords}
Adversarial water-filling, minimax optimization, wireless foundation models, spectrum sharing.
\end{IEEEkeywords}

\section{Introduction}

The sixth generation (6G) mobile communication network is envisioned to be AI-native, where intelligence is embedded throughout network design, operation, and service provisioning~\cite{wang6g2023}. This requires scalable and autonomous resource adaptation as a key system principle~\cite{liu2019dynamic}.
Within this paradigm, Open RAN has emerged as a transformative architecture enabling multi-vendor interoperability through standardized open interfaces~\cite{HABLER2025104090}. 
Such architectural flexibility becomes particularly important as next-generation networks integrate high-altitude platform stations and non-terrestrial networks (NTNs) to support large-scale low Earth orbit (LEO) constellations~\cite{OpenRANxApps,OpenRANINFOCOM,chen2025openranetneuralizedspectrumaccess}. 
Recent LEO direct-to-device proceedings involving Starlink and Omnispace have raised coexistence concerns, highlighting competitive interference and spectrum-coordination challenges under mobility, beam dynamics, directional links, and overlapping spectrum use~\cite{fcc2024spacex}.

\begin{figure}
    \centering
    \includegraphics[width=1.0\linewidth]{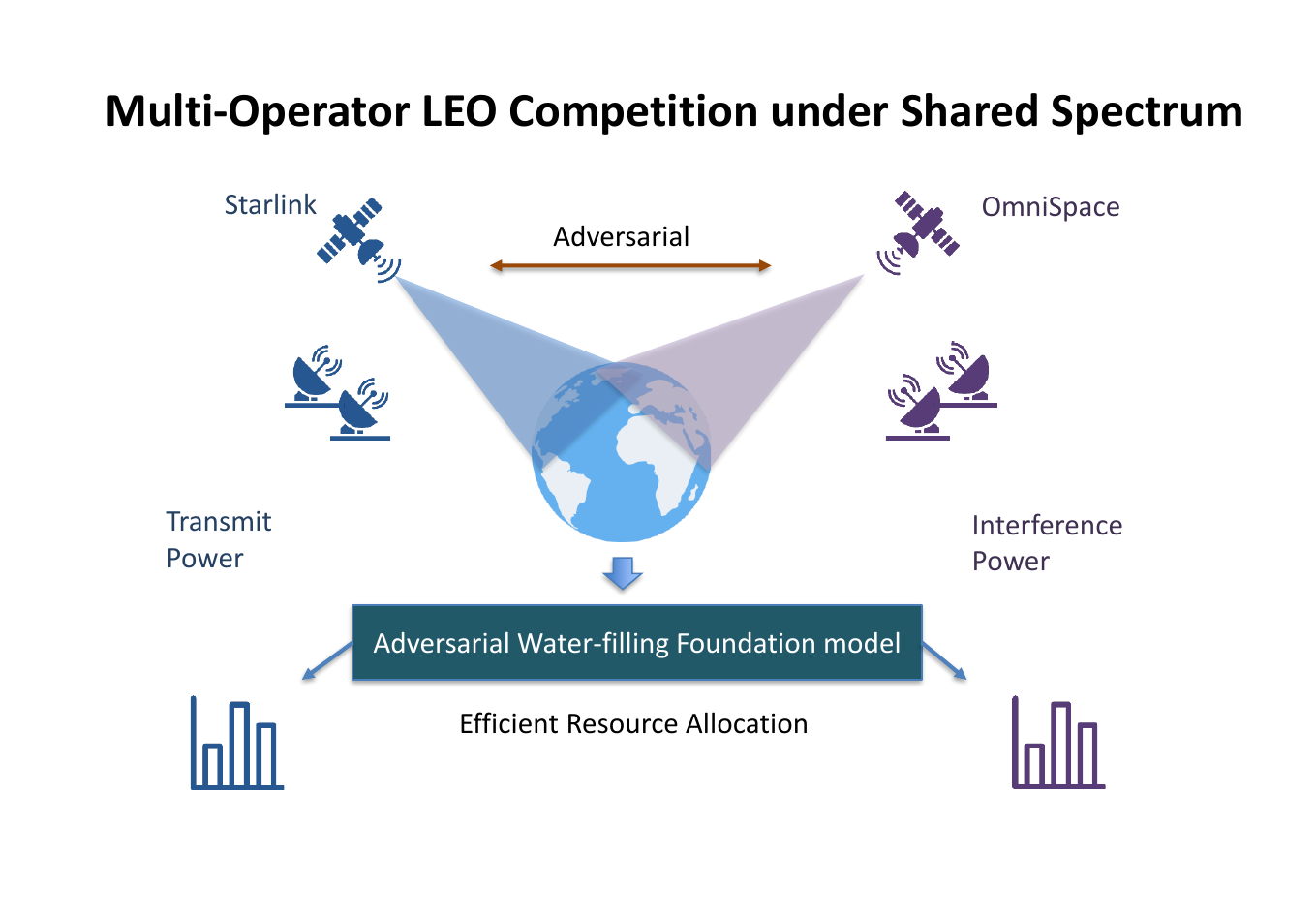}
\caption{Competition between Starlink and Omnispace \textcolor{black}{motivates} a minimax resource-allocation problem, for which the AWF foundation model provides an efficient solution.}
    \vspace{-3mm}
    \label{fig:example1}
\end{figure}

Unlike traditional cellular systems with relatively fixed base-station deployments, NTN coexistence creates a space--time varying interference environment. A representative example is the emerging coexistence between Starlink direct-to-cell services and Omnispace-like mobile-satellite systems under shared or adjacent spectrum and regulatory constraints~\cite{fcc2024spacex}. Highly directional beams, overlapping footprints, heterogeneous antenna gains, and terrestrial--satellite spectrum reuse make interference strongly location dependent, while satellite motion, beam steering, short visibility windows, and handovers continuously reshape the dominant interferers and coupling strengths~\cite{li2024codedwaterfillingmultiuserinterference,Hraishawi2023}. Consequently, one operator's transmission may appear as uncertain and rapidly changing interference to another operator, motivating adversarial resource allocation robust to worst-case space--time interference (cf. Fig.~\ref{fig:example1}). Existing studies on multi-operator NTN coexistence show that power flux density limits, equivalent isotropically radiated power masks, orbital geometry, beam directivity, frequency reuse, and payload flexibility strongly affect interference footprints and coexistence performance~\cite{Hraishawi2023,giambene2018satellite,kassing2020exploring,oranspectrumsharing}. 
\textcolor{black}{These observations motivate a snapshot-level optimization framework that captures spatial coupling and competitive interference interactions across heterogeneous operating conditions. Such a framework provides a general resource-allocation abstraction, with multi-operator NTN coexistence as a particularly relevant operating regime. In LEO systems, changing visibility, beam configurations, and interference relationships can induce substantial variation in resource dimensions, interference coupling, and spatial constraint structures across operating snapshots. Accordingly, we focus on resource allocation within individual operating snapshots, with channel and spatial-constraint parameters specified for each snapshot.}

From an information-theoretic perspective, water-filling provides the fundamental mechanism for power allocation over parallel Gaussian channels \cite{gallager1968information}. For practical discrete constellations, the mercury/water-filling  extends this principle by incorporating modulation-dependent ``mercury levels'' through the I--MMSE relationship \cite{lozano2006optimum}. While Gaussian water-filling yields concave problems with well-characterized solutions, mercury/water-filling introduces more intricate utility curvature and may lose global convex--concave properties under adversarial interference. This motivates a minimax formulation for competitive NTN spectrum sharing. Recent work further connects water-filling with modern optimization by interpreting it as a proximal operator parameterized by a small number of dual variables \cite{palomar2005practical}.
Under spatial constraints and multi-operator competition~\cite{xingNewviewpoint2020,zhengWirelessMaxMin2016}, these water levels become dual variables linking high-dimensional channel allocations. 
This viewpoint connects our problem to primal--dual optimization methods such as the primal--dual hybrid gradient (PDHG) algorithm~\cite{chambolle2011first,goldstein2013adaptive}, as well as distributed multi-agent optimization frameworks studied in game-theoretic processing \cite{Yang2010}. However, most model-based methods typically need to be executed from scratch for each new network instance and do not naturally generalize across varying channel dimensions, constraint types, or modulation distributions.

Learning-based approaches have therefore been explored to accelerate wireless resource allocation~\cite{Sun_2018}. 
Although these methods can reduce computational latency, most existing approaches remain task-specific and dimension-dependent, limiting their adaptability to heterogeneous and rapidly evolving 6G environments. 
More recently, the concept of wireless foundation models has emerged as a promising paradigm for capturing invariances that generalize across tasks and system configurations~\cite{wireless10582827}. 
Rather than replacing model-based optimization, foundation models aim to encode physical symmetries and coupling structures in a transferable manner, enabling the model to learn solution strategies that transfer across related optimization problems.

In this paper, we introduce adversarial water-filling (AWF) as a unified framework for competitive wireless resource allocation over frequency and space, covering adversarial Gaussian water-filling and mercury/water-filling in competitive NTN spectrum sharing. 
AWF naturally supports a foundation-model approach through channel permutation invariance, sparse constraint-induced interactions, and global water-level coordination via dual variables. 
Inspired by classical Gaussian water-filling~\cite{palomar2005practical} and primal--dual/proximal methods~\cite{chambolle2011first,goldstein2013adaptive}, we develop a domain-specific wireless foundation model with permutation-invariant channel representations, constraint-graph propagation, and learned primal--dual dynamics. 
\color{black}
The domain-specific wireless foundation model is reused without instance-specific retraining across channel dimensions, constraint patterns, modulation formats, and channel-weighted objective variants, demonstrating structured zero-shot generalization across multiple forms of variation within the AWF family. This domain-specific pretraining-and-reuse strategy is designed to capture optimization structure shared across related AWF formulations.

The technical distinction lies in combining the AWF minimax
formulation with a reusable architecture that explicitly reflects
its optimization structure. The theoretical analysis
characterizes the Gaussian and mercury/water-filling regimes and
establishes projected-stationarity/Karush--Kuhn--Tucker (KKT)
consistency and conditional local stability. From a wireless perspective, such reuse across varying channel, interference, and constraint configurations is particularly relevant to NTN operating snapshots. 
\color{black}
A preliminary conference version appeared in~\cite{tong2025}, focusing on Gaussian AWF in Open RAN, proximal/PDHG interpretation, and finite-step water-level search. 
This journal version substantially extends it to spatially constrained and mercury/water-filling AWF, with new foundation-model architecture, KKT/local-convergence analysis, cross-size, cross-constraint, and cross-modulation generalization experiments, generalization to channel-weighted objective variants, controlled baseline and ablation studies, and an NTN-inspired channel and interference sensitivity study. 
The main contributions of this paper are summarized as follows:
\begin{itemize}
\color{black}
\item We formulate adversarial water-filling as a minimax problem for competitive wireless resource allocation under spatial constraints, with multi-operator NTN spectrum sharing as a motivating operating regime.
\color{black}
\item We unify Gaussian water-filling and mercury/water-filling within a common minimax formulation and identify both convex--concave regimes and more challenging nonconvex regimes.

\color{black}
\item We develop a domain-specific wireless foundation model
for AWF with permutation-invariant encoding, constraint-aware
graph representations, and learned primal--dual dynamics,
supporting reuse across dimensions, constraints, modulation
formats, and channel-weighted objective variants without
retraining.
\color{black}

\item We establish
\color{black}
projected-stationarity/KKT consistency and conditional local
stability of the learned AWF dynamics under explicit
regularity and contraction conditions, clarifying the scope of
standard first-order optimality and local stability arguments
in the proposed model.
\color{black}

\end{itemize}

\section{Related Work}
\label{sec:related}

% \subsection{Open RAN and Multi-Operator NTN Coexistence}

% The transition toward 6G has accelerated the adoption of Open RAN architectures that emphasize modularity, multi-vendor interoperability, and AI-driven control~\cite{oransurveytutorial2023,oranspectrumsharing,OpenRANxApps,OpenRANINFOCOM}. 
% In parallel, NTNs, particularly LEO constellations such as Starlink and Omnispace, are becoming integral to global connectivity~\cite{habler2023adversarialmachinelearningthreat}. 
% Their directional beams, rapid orbital motion, and dynamic payload configurations produce highly spatial and time-varying interference patterns. 
% When multiple operators share or operate in adjacent frequency bands, competitive interference becomes unavoidable. Existing studies analyze multi-operator coexistence under regulatory constraints such as power flux density (PFD) limits and equivalent isotropically radiated power (EIRP) masks, showing how orbital geometry and beam directivity shape interference footprints~\cite{Hraishawi2023}. 
% System-level analyses further demonstrate that narrow beams, frequency reuse, and payload flexibility significantly influence coexistence performance~\cite{giambene2018satellite,kodheli2020satellite}, while rapidly evolving constellation topologies introduce time-varying spatial coupling~\cite{kassing2020exploring}. 
% These characteristics indicate that power allocation in multi-operator NTN systems is inherently spatially coupled and competitive, motivating optimization frameworks that explicitly model interference interactions.

\subsection{Water-Filling and Minimax Games}

Water-filling is a fundamental principle for information-theoretic power allocation, originating from optimal resource allocation over parallel Gaussian channels~\cite{cover1991elements}. 
It has been widely applied to SISO, OFDM, and MIMO systems~\cite{tse2005fundamentals}. 
For practical discrete constellations, mercury/water-filling introduces modulation-dependent corrections through the I--MMSE relationship, thereby extending classical results beyond Gaussian signaling~\cite{lozano2006optimum}. 
Iterative water-filling algorithms have also been extensively studied for Gaussian multiuser channels and are known to converge to optimal or equilibrium power allocations under suitable conditions~\cite{IWF2004}. 
This naturally leads to a game-theoretic interpretation: in multiuser interference channels, each transmitter performs water-filling against the interference generated by other users, while the coupled sum-rate maximization problem is generally nonconvex~\cite{yu2006dual}. 

When interference is no longer a fixed impairment but an uncertain and potentially adversarial action, power allocation can be formulated as a two-player minimax problem.  
Such minimax formulations have been extensively studied in convex optimization and game theory, with classical results establishing the existence of saddle points and efficient methods for computing equilibria~\cite{ghosh2003minimax}. 
However, most existing water-filling formulations assume known channel statistics and predictable interference levels, which limits their applicability to competitive NTN environments where interference may be uncertain, nonstationary, and operator dependent.

First-order methods, including PDHG and related proximal algorithms, provide practical tools for solving large-scale instances of such problems~\cite{chambolle2011first,goldstein2013adaptive}. 
Worst-case and max--min formulations have appeared in anti-jamming communications, where transmit strategies are designed to maintain throughput under uncertain or malicious interference~\cite{Garnaevmaxmin2025}. 
Related multi-access game models further examine interactions between selfish users and adaptive jammers under Nash or Stackelberg formulations~\cite{garnaev2022antijamming}. 

\color{black}
Existing game-theoretic and water-filling methods are often developed for fixed dimensions, prescribed uncertainty sets, or Gaussian signaling. The AWF formulation considered here combines worst-case interference with spatially coupled constraints and both Gaussian and mercury/water-filling utilities. Building on this structure, we develop a domain-specific wireless foundation model that learns reusable primal–dual water-level dynamics across varying problem configurations. The resulting approach retains the structure of model-based first-order optimization while enabling reuse across related AWF settings.
\color{black}

% In wireless networks, power allocation and interference management have also been studied through game-theoretic frameworks.  Distributed power control has been analyzed using Nash equilibrium and variational inequality theory, while network utility maximization and dual decomposition connect rate and power allocation to convex optimization models~\cite{chiang2017power,yu2002distributed,tan2015wireless}. 
\subsection{Learning-Based Spectrum Management}

Learning has been widely explored to accelerate wireless resource allocation under complex and dynamic network conditions. 
Early work used neural networks to approximate optimization solution mappings for fast inference~\cite{Sun_2018}, later extending to dynamic settings via continual and bilevel optimization~\cite{Sun_2022}. 
Reinforcement learning and multi-agent methods have been applied to distributed power control and spectrum sharing~\cite{Nasir_2019}, while graph neural networks can exploit wireless channel graphs to capture interference coupling and improve scalability in resource allocation~\cite{shen2021gnnrrm}. 
Model-based deep learning can improve interpretability and robustness by embedding iterative optimization principles into neural architectures~\cite{yang2024knowledge}. However, unpredictable NTN coexistence introduces additional challenges beyond conventional learning-based spectrum management. 
The number of visible satellites, active beams, dominant interferers, constraint patterns, and modulation formats may change across time, space, and operators. 
As a result, models trained for a fixed topology or a fixed channel dimension may fail to generalize under distribution shifts caused by satellite mobility, beam steering, handovers, and adversarial interference. 
Permutation-invariant architectures, such as Deep Sets, Set Transformer, and Perceiver, provide useful tools for size-generalized modeling of unordered channel sets~\cite{deepsets2017,lee2019settransformerframeworkattentionbased,jaegle2021perceivergeneralperceptioniterative}, while graph-based representations can encode spatial coupling induced by interference and regulatory constraints.
\color{black}
Graph Transformers provide a complementary approach by
combining graph structure with attention mechanisms and have
shown strong capability in graph representation learning
\cite{ying2021}. More recently, graph-based
foundation models have also been investigated for wireless
resource allocation
\cite{sheng2026graphfoundationmodelwireless}. These developments motivate
the comparison between message-passing and attention-based graph
representations for scalable wireless optimization.
\color{black}

\color{black}
More recently, foundation-model concepts have begun to be explored in the wireless domain. WirelessLLM \cite{wireless10582827}, for example, investigates the adaptation of large language models for wireless intelligence, illustrating the broader potential of pretrained and transferable models in wireless applications. While focusing on a different model class and task setting, it provides useful motivation for exploring reusable pretrained models in wireless optimization.
\color{black}
NVIDIA has also advanced this direction through OpenRAN Gym on platforms for advanced wireless research (PAWR) infrastructures, enabling large-scale data collection and learning-based experimentation for multi-operator scenarios~\cite{nvidia2022}. 
PAWR provides programmable wireless testbeds for validating new communication techniques and network architectures under realistic conditions, bridging simulation and real-world evaluation for learning-based spectrum management.
\textcolor{black}{These developments motivate pretrained models that capture reusable optimization structure across varying wireless configurations.}
% This motivates our constraint-varying and size-generalized AWF foundation model for competitive NTN resource allocation.

% The proposed adversarial water-filling framework integrates these ideas:
% the primal allocations are closed-form proximal steps, while the dual 
% variables are updated using a damped Newton refinement over a low-dimensional
% nonlinear system. This hybrid structure yields rapid convergence, even in
% the presence of asymmetric spatial constraints and adversarial interference,
% thereby providing a scalable solution for multi-operator coexistence in
% large NTN deployments.

\section{Adversarial Water-Filling over Frequency}

We consider a resource allocation problem with a global budget constraint, where the objective is separable across components but coupling arises through the total resource constraint. 
Water-filling characterizes this structure by equalizing marginal utilities across all active components. In conventional settings with fixed receiver noise, this leads to optimal power-allocation strategies for parallel Gaussian channels \cite{boyd2004convex}. 
We extend this framework to an AWF problem in which the receiver faces variable interference subject to a global interference budget. 
The resulting formulation naturally leads to a minimax resource-allocation problem between transmit power and adversarial interference, reminiscent of worst-case power control and spectrum allocation in wireless networks over frequency \cite{ghosh2003minimax,spectrumTan2011}.

Consider $m$ channels with transmit power 
$\mathbf{p}\in\mathbb{R}^m_{\ge0}$ and interference power 
$\mathbf{n}\in\mathbb{R}^m_{\ge0}$.  
Channel $i$ has channel gain $\beta_i>0$ and background noise 
$\sigma_i>0$.  
Both players satisfy global budgets 
$\mathbf{1}^\mathsf{T}\mathbf{p}=P$ and 
$\mathbf{1}^\mathsf{T}\mathbf{n}=N$. 
The AWF problem over frequency is formulated as
\begin{equation}
\max_{\mathbf{p}\ge 0,\ \mathbf{1}^\mathsf{T}\mathbf{p}=P}\ 
\min_{\mathbf{n}\ge 0,\ \mathbf{1}^\mathsf{T}\mathbf{n}=N}\ 
f(\mathbf{p},\mathbf{n}),
\label{eq:spectrum-awf-general}
\end{equation}
where $f(\mathbf{p},\mathbf{n})$ depends on the modulation model.

\subsection{Gaussian Water-filling}

For Gaussian channels, the achievable sum capacity (in nats per channel use) is
\begin{equation}
f(\mathbf p,\mathbf n) = \sum_{i=1}^m \log\!\left(1+\frac{\beta_i p_i}{\sigma_i+n_i}\right).
\end{equation}

For fixed $\mathbf n$, $f(\mathbf p,\mathbf n)$ is 
concave in $\mathbf p$ over the feasible simplex, while for
fixed $\mathbf p$, $f(\mathbf p,\mathbf n)$ is convex in
$\mathbf n$.

\color{black}
The convexity with respect to $\mathbf n$ is strict on the active support
of $\mathbf p$, but is not uniformly strong over the entire simplex because
the curvature vanishes when $p_i=0$. Under $\beta_i>0$, $\sigma_i>0$,
$P>0$, and a nonempty feasible set, the Gaussian AWF nevertheless admits
a unique saddle point. For any fixed feasible $\mathbf p$, its support
is nonempty. Allocating interference to a coordinate with $p_i=0$ does not
affect the objective, whereas reallocating that interference to an active
coordinate strictly decreases it. Hence, every interference best response
places no power outside the active support of $\mathbf p$. Restricted to
this support, the interference subproblem is strictly convex and therefore
has a unique minimizer. For any fixed feasible $\mathbf n$, the objective
is strictly concave in $\mathbf p$, so the transmit-power best response is
also unique. Consequently, if two distinct saddle pairs existed, the saddle
inequalities would imply multiple best responses to the same opposing
allocation, contradicting the uniqueness of the corresponding best
responses. Thus, the Gaussian AWF admits a unique saddle pair.
\color{black}
The resulting problem is formulated as the following constrained max–min optimization:
\begin{equation}
\max_{\mathbf p\ge 0} \min_{\mathbf n\ge 0}
\sum_{i=1}^m \log\left(1+\frac{\beta_i p_i}{\sigma_i+n_i}\right),
\label{eq:min-max-problem}
\end{equation}
subject to $\mathbf{1}^\mathsf{T} \mathbf p=P$ and $\mathbf{1}^\mathsf{T} \mathbf n=N$.

\noindent\textit{Optimal Power Allocation.}
For a fixed interference vector $\mathbf n$, the maximization over $\mathbf p$ reduces to the water-filling problem. The optimal transmit power allocation takes the form
\begin{equation}
p_i^\star = \max\left(\nu-\frac{\sigma_i+n_i^\star}{\beta_i},0\right), \quad \forall i
\label{eq:p_optimal}
\end{equation}
where $\nu$ is the water level determined by the power constraint.

Conversely, for a fixed transmit power allocation $p$, the adversarial interference allocation minimizing the capacity is obtained via the corresponding Lagrangian formulation, leading to the closed-form expression
\begin{equation}
n_i^\star =
\max\left(
-\frac{\beta_i p_i^\star}{2}
+\frac{1}{2}\sqrt{(\beta_i p_i^\star)^2-4\frac{\beta_i p_i^\star}{\mu}}
-\sigma_i,0
\right), \quad \forall i
\label{eq:n_op}
\end{equation}
where $\mu$ is the dual variable associated with the total interference power constraint $\mathbf{1}^\mathsf{T}\mathbf n=N$ \cite{ghosh2003minimax}. The minimax formulation of the water-filling problem reveals a strong alignment between active transmit and interference channels, as formalized in the following result.

\begin{theorem}[Adversarial Water-Filling over Frequency]
Consider the minimax problem in \eqref{eq:min-max-problem}. 
Let $(\mathbf{p}^\star,\mathbf{n}^\star)$ denote an optimal solution.
\label{thm:awf_frequency}
\begin{itemize}
\item \emph{Inactive channels:} For any $i\in\{1,\dots,m\}$, 
if $p_i^\star=0$, then the corresponding interference allocation 
satisfies $n_i^\star=0$.

\item \emph{Active channels:} For any $i\in\{1,\dots,m\}$ with 
$p_i^\star>0$ and $n_i^\star>0$, the interference dual variable 
$\mu$ satisfies
$\mu
=
\frac{1}{\beta_i\nu}
-
\frac{1}{\sigma_i+n_i^\star}$.
\end{itemize}
\end{theorem}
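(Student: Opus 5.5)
The plan is to read both bullets off the first-order optimality conditions of the two coupled best-response problems at the saddle pair. The existence and uniqueness of $(\mathbf p^\star,\mathbf n^\star)$ established above for the Gaussian AWF will be taken as given. The pair then satisfies two conditions simultaneously. First, $\mathbf p^\star$ maximizes the strictly concave $f(\cdot,\mathbf n^\star)$ over the power simplex, which yields the water-filling form \eqref{eq:p_optimal} with level $\nu$. Second, $\mathbf n^\star$ minimizes the convex $f(\mathbf p^\star,\cdot)$ over the interference simplex. For the latter I will use the Lagrangian $f(\mathbf p^\star,\mathbf n)-\mu(\mathbf 1^\mathsf T\mathbf n-N)-\boldsymbol\lambda^\mathsf T\mathbf n$ with $\boldsymbol\lambda\ge0$. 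Because the constraints are affine, these KKT conditions are necessary and sufficient.

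\emph{Inactive channels.} I will reuse the support argument given before the theorem. If $p_i^\star=0$, then $\partial f/\partial n_i=0$ at $\mathbf p^\star$, so interference placed on channel $i$ has no effect on the objective. Since $P>0$, some channel $j$ has $p_j^\star>0$, and there $\partial f/\partial n_j=-\beta_jp_j^\star/\big((\sigma_j+n_j)(\sigma_j+n_j+\beta_jp_j^\star)\big)<0$. Suppose $n_i^\star>0$. Moving a small amount $\epsilon$ of interference from $i$ to $j$ keeps $\mathbf n$ feasible and strictly decreases $f$, which contradicts the minimality of $\mathbf n^\star$. Hence $n_i^\star=0$. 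Equivalently, in KKT terms, $\lambda_i=\mu$ would be forced while some active $j$ has $\lambda_j=\mu+\beta_jp_j^\star/(\cdots)>\mu$. This is impossible when $n_i^\star>0$, because that would require $\lambda_i=0$ and hence $\lambda_j>0$ for every $j$, i.e.\ $n_j^\star=0$ on all active channels, whereas an interference shift onto an active channel strictly improves $f$.

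\emph{Active channels.} Take $i$ with $p_i^\star>0$ and $n_i^\star>0$. Complementary slackness gives $\lambda_i=0$, so stationarity reads
\[
\mu=\frac{\partial f}{\partial n_i}(\mathbf p^\star,\mathbf n^\star)=-\frac{\beta_ip_i^\star}{(\sigma_i+n_i^\star)(\sigma_i+n_i^\star+\beta_ip_i^\star)}.
\]
Since $p_i^\star>0$, \eqref{eq:p_optimal} gives $\beta_ip_i^\star=\beta_i\nu-(\sigma_i+n_i^\star)$, and therefore $\sigma_i+n_i^\star+\beta_ip_i^\star=\beta_i\nu$. Substituting this turns the right-hand side into
\[
-\frac{\beta_i\nu-(\sigma_i+n_i^\star)}{(\sigma_i+n_i^\star)\beta_i\nu}=\frac{1}{\beta_i\nu}-\frac{1}{\sigma_i+n_i^\star},
\]
which is the claimed identity.

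The main obstacle is bookkeeping rather than substance: the sign convention for $\mu$ has to match the closed form \eqref{eq:n_op}. With the convention above, $\mu\le0$ is simply the marginal value $\partial f/\partial n_i$. I will verify that solving the same stationarity equation as a quadratic in $\sigma_i+n_i$ reproduces \eqref{eq:n_op} up to the sign of $\mu$, and then fix the convention to match the statement. A second point is that $\nu$ is a single water level shared by all active channels. This follows from the single power constraint, and it implies that $\mu$ is consistent across all channels that are active in both $\mathbf p^\star$ and $\mathbf n^\star$.
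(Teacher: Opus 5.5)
Your proof is correct. The active-channel bullet follows the paper's argument exactly. The paper's Lagrangian $f+\mu(N-\mathbf 1^\mathsf T\mathbf n)-\mathbf c^\mathsf T\mathbf n$ coincides with yours, stationarity gives $\mu=\partial f/\partial n_i$, and substituting $\sigma_i+n_i^\star+\beta_ip_i^\star=\beta_i\nu$ yields the identity.

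Your sign worry also goes away. With $\mu=\partial f/\partial n_i<0$, the stationarity equation becomes $x^2+\beta_ip_i^\star x+\beta_ip_i^\star/\mu=0$ in $x=\sigma_i+n_i$. Its positive root reproduces \eqref{eq:n_op} exactly, with no sign change.

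The two arguments differ on the inactive-channel bullet. The paper sets $p_i^\star=0$ in the closed form \eqref{eq:n_op} and reads off $n_i^\star=\max(-\sigma_i,0)=0$. You instead use the exchange argument, which the paper itself gives in the uniqueness paragraph before the theorem: interference on a channel with $p_i^\star=0$ is wasted, and moving it to an active channel strictly lowers $f$.

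Your route is self-contained and does not depend on \eqref{eq:n_op}. That formula is quoted from the literature, and its derivation divides by $\mu$, so applying it at $p_i^\star=0$ tacitly assumes $\mu<0$. The paper's route is a one-liner once \eqref{eq:n_op} is accepted on every coordinate.

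One correction concerns your ``in KKT terms'' restatement, which flips the sign of $\mu$ relative to your own Lagrangian. Stationarity gives $\lambda_k=\partial f/\partial n_k-\mu$. So $\lambda_i=-\mu$ on an inactive channel, and $\lambda_j=-\mu-\beta_jp_j^\star/\bigl((\sigma_j+n_j^\star)(\sigma_j+n_j^\star+\beta_jp_j^\star)\bigr)<-\mu$ on an active one.

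If $n_i^\star>0$, then $\lambda_i=0$, hence $\mu=0$ and $\lambda_j<0$. This violates dual feasibility immediately, so no second appeal to the exchange argument is needed. Equivalently, $\mu\le\partial f/\partial n_j<0$ for any active $j$, so $\lambda_i=-\mu>0$ and complementary slackness forces $n_i^\star=0$.

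The relation you wrote, $\lambda_k=\mu-\partial f/\partial n_k$, is the KKT system for maximizing $f$. That is why it led you to ``all interference on active channels vanishes'' instead of a contradiction.
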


\begin{proof}
If $p_i^\star=0$, substituting into the closed-form expression 
\eqref{eq:n_op} gives $n_i^\star=0$ since $\sigma_i>0$. 
Hence, channels inactive for transmission also receive zero 
interference allocation \cite{tong2025}.

For channels with $p_i^\star>0$, consider the interference 
minimization subproblem with fixed $\mathbf{p}^\star$. 
The Lagrangian is
\begin{equation}
\mathcal{L}(\mathbf{n},\mu,\mathbf{c})
=
\sum_{i=1}^m 
\log\!\left(
1+\frac{\beta_i p_i^\star}{\sigma_i+n_i}
\right)
+\mu\!\left(N-\mathbf{1}^\mathsf{T}\mathbf{n}\right)
-\mathbf{c}^\mathsf{T}\mathbf{n},
\end{equation}
where $\mu$ and $\mathbf{c}\ge0$ are the dual variables. 
For any channel $i$ with $n_i^\star>0$, complementary slackness 
gives $c_i=0$, and the stationarity condition yields
\begin{equation}
-\frac{\beta_i p_i^\star}
{(\sigma_i+n_i^\star)
(\sigma_i+n_i^\star+\beta_i p_i^\star)}
-\mu
=0.
\end{equation}

Substituting the transmit water-filling solution 
\eqref{eq:p_optimal} into the above equation gives
$\mu
=
\frac{1}{\beta_i\nu}
-
\frac{1}{\sigma_i+n_i^\star}$,
for any $i\in\{1,\dots,m\}$ with 
$p_i^\star>0$.
\end{proof}

The above relation shows that the adversarial interference 
allocation induces a water level $\mu$ coupled with the 
transmit-side water level $\nu$. The AWF problem therefore 
yields dual water levels for transmit power and interference 
power, as illustrated in Fig.~\ref{fig:twofigs}.
\begin{figure}[t]
\centering

\begin{minipage}{0.49\linewidth}
\centering
\includegraphics[width=\linewidth]{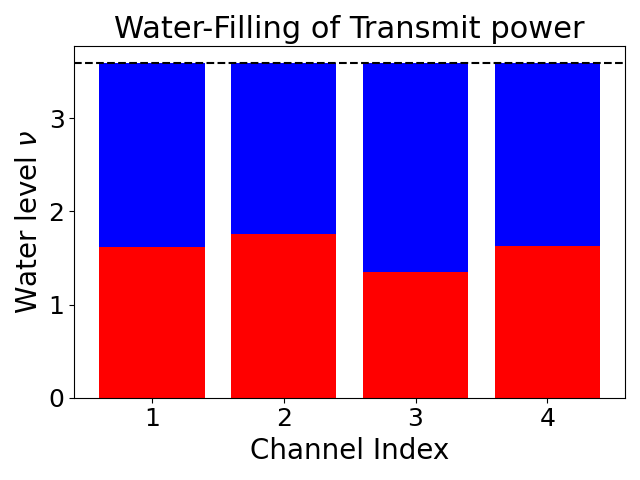}
\end{minipage}
\hfill
\begin{minipage}{0.49\linewidth}
\centering
\includegraphics[width=\linewidth]{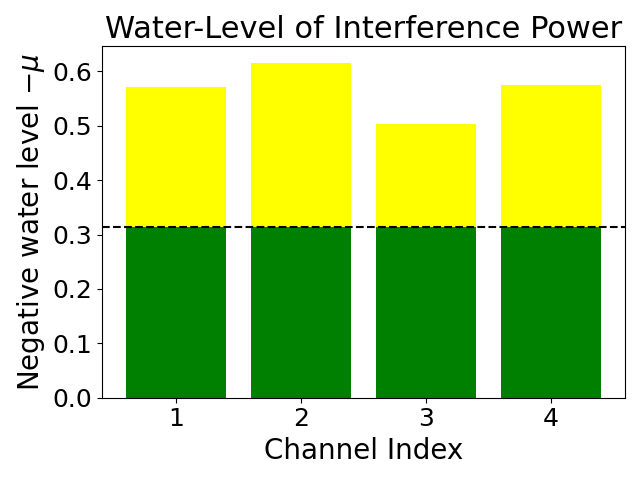}
\end{minipage}

\caption{Illustration of adversarial water-filling. 
(a) Transmit-side water-filling for $\mathbf p^\star$, where the dashed line denotes the water level $\nu$, red bars denote effective base levels $(\boldsymbol{\sigma}+ \mathbf n^\star)/\boldsymbol \beta$, and blue segments denote powers $\mathbf p^\star$. 
(b) Interference allocation for $\mathbf n^\star$, where the dashed line denotes the interference water level $-\mu$, the total bar height is $1/(\boldsymbol \sigma+\mathbf n^\star)$, and yellow segments denote $1/(\boldsymbol\beta\nu)$.}
\vspace{-3mm}
\label{fig:twofigs}
\end{figure}

\subsection{Mercury/water-filling}

We keep the same variables and constraints $(\mathbf p,\mathbf n,P,N, \beta_i,\sigma_i)$, but replace the logarithmic utility by a mutual-information objective induced by a discrete input constellation:
\begin{equation}
f(\mathbf  p,\mathbf n)=\sum_{i=1}^{m} I_i\!\left(\frac{\beta_i p_i}{\sigma_i+n_i}\right),
\qquad s.t. \quad
\gamma_i=\frac{\beta_i p_i}{\sigma_i+n_i}.
\end{equation}
Here $I_i(\gamma)$ denotes the mutual information (in nats/use) of channel $i$ as a function of the SINR $\gamma$. For Gaussian inputs, $I_i(\gamma)=\log(1+\gamma)$ and the model reduces to the Gaussian-channel case. The minimax game becomes 
\begin{equation}
\max_{\mathbf p\ge 0,\ \mathbf{1}^\mathsf{T}p=P}\ \min_{\mathbf n\ge 0,\ \mathbf{1}^\mathsf{T}n=N}\ 
\sum_{i=1}^{m} I_i\!\left(\gamma_i\right).
\label{eq:min-max-mercury}
\end{equation}

\color{black}
Theorem~\ref{thm:awf_frequency} can be naturally extended at the level of first-order stationarity to the mercury/water-filling problem over frequency. For fixed interference power $n$, the transmit-power stationarity condition for any active channel $i\in\{1,\dots,m\}$ with $p_i^\star>0$ is
\color{black}
\begin{equation}
\frac{\beta_i}{\sigma_i+n_i^\star}\,
I_i'\!\left(\gamma_i^\star\right)
=
\nu,
\label{eq:p_mercury}
\end{equation}
\color{black}
where $\nu$ enforces $\mathbf{1}^\mathsf{T}\mathbf p=P$. The interference-power stationarity condition for any $i\in\{1,\dots,m\}$ with $n_i^\star>0$ is
\color{black}
\begin{equation}
-\frac{\beta_i p_i^\star}{(\sigma_i+n_i^\star)^2}\,
I_i'\!\left(\gamma_i^\star\right)
=
\mu,
\label{eq:n_mercury}
\end{equation}
where $\mu$ enforces $\mathbf{1}^\mathsf{T}\mathbf n=N$.
Using the I--MMSE identity \cite{guo2005mutual},
\begin{equation}
I_i'(\gamma)=\mathrm{mmse}_i(\gamma), \quad \forall i
\label{eq:mmse}
\end{equation}
\eqref{eq:p_mercury}--\eqref{eq:n_mercury} define a dual water-filling system, with transmit-side and interference-side water levels $\nu$ and $\mu$, respectively.

\section{Adversarial Water-Filling over Space}
\label{sec:spatial-awf}

We develop an AWF framework over space by augmenting
\eqref{eq:spectrum-awf-general} with linear power-shaping
constraints. Consider the same power-allocation vectors
$\mathbf{p}$ and $\mathbf{n}$, budgets $P$ and $N$, and
channel parameters $\beta_i$ and $\sigma_i$. We impose the
linear constraint
$\mathbf{A}\mathbf{p}\leq\hat{\mathbf{p}}$, where
$\mathbf{A}\in\mathbb{R}_{\geq0}^{S\times m}$ is a sparse
nonnegative matrix encoding power limits across channels and
$\hat{\mathbf{p}}\in\mathbb{R}^S$ denotes the corresponding
constraint-threshold vector. The AWF problem over space is
\begin{equation}
\max_{\substack{
\mathbf{p}\geq0,\;
\mathbf{1}^{\mathsf T}\mathbf{p}=P,\\
\mathbf{A}\mathbf{p}\leq\hat{\mathbf{p}}
}}
\;
\min_{\mathbf{n}\geq0,\;
\mathbf{1}^{\mathsf T}\mathbf{n}=N}
f(\mathbf{p},\mathbf{n}),
\label{eq:spatial-awf-general}
\end{equation}
where $f(\mathbf{p},\mathbf{n})$ depends on the modulation model.

%--------------------------------------------------
\subsection{Gaussian Water-filling}
%--------------------------------------------------

Under Gaussian water-filling, the achievable sum rate is
\begin{equation}
f(\mathbf p,\mathbf n)
=
\sum_{i=1}^m 
\log\!\left(1+\frac{\beta_i p_i}{\sigma_i+n_i}\right).
\label{eq:spatial-gaussian}
\end{equation}

\color{black}
For fixed $\mathbf n$, the objective is strictly concave in
$\mathbf p$ over the feasible power polytope. For fixed
$\mathbf p$, it is convex in $\mathbf n$ and strictly convex over the active interference coordinates. 
Consequently, for Gaussian inputs, AWF admits a
convex--concave max--min formulation:
\color{black}
\begin{equation}
\max_{\mathbf p\ge 0} \;\min_{\mathbf n\ge 0}\; 
\sum_{i=1}^m \log\!\left(1+\frac{\beta_i p_i}{\sigma_i+n_i}\right),
\label{eq:spatial-awf}
\end{equation}
subject to $\mathbf{1}^\mathsf{T}\mathbf p=P$, $\mathbf{1}^\mathsf{T}\mathbf n=N$, and $\mathbf A \mathbf p \le \mathbf {\hat p}$.

Problem \eqref{eq:spatial-awf} is more general than the formulation discussed in the previous section. In particular, the min--max problem in \eqref{eq:min-max-problem} can be recovered as the special case in which there are no additional spatial power constraints, i.e., $\mathbf A=0$ so that the constraint $\mathbf A \mathbf p \le \mathbf {\hat p}$ becomes inactive. Under these additional linear constraints, the classical scalar water-level characterization no longer applies directly. Instead, the resulting water-filling solution becomes multi-dimensional, with the water levels coupled through the constraint $\mathbf A \mathbf p \le \mathbf {\hat p}$.

Introduce the dual variable $\nu$ for the total transmit-power
constraint
$\mathbf{1}^\mathsf{T}\mathbf{p}=P$, and
$\boldsymbol{\theta}\ge 0$ for the linear constraint
$\mathbf{A}\mathbf{p}\le \hat{\mathbf{p}}$.
Let $\mu$ denote the dual variable associated with the total
interference-power constraint
$\mathbf{1}^\mathsf{T}\mathbf{n}=N$.
We denote by $\mathbf{b}\ge 0$ and $\mathbf{c}\ge 0$ the dual
variables associated with the nonnegativity constraints
$\mathbf{p}\ge 0$ and $\mathbf{n}\ge 0$, respectively. The Lagrangian is
\begin{equation}
\begin{aligned}
\mathcal{L}(
\mathbf{p},
\mathbf{n},
\nu,
\mu,
\boldsymbol{\theta},
\mathbf{b},
\mathbf{c})
&=
\sum_{i=1}^{m}
\log\!\left(
1+\frac{\beta_i p_i}{\sigma_i+n_i}
\right)
-\nu\!\left(
\mathbf{1}^\mathsf{T}\mathbf{p}-P
\right) \\
&\quad
+\mu\!\left(
N-\mathbf{1}^\mathsf{T}\mathbf{n}
\right)
-\boldsymbol{\theta}^\mathsf{T}
\left(
\mathbf{A}\mathbf{p}-\hat{\mathbf{p}}
\right) \\
&\quad
+\mathbf{b}^\mathsf{T}\mathbf{p}
-\mathbf{c}^\mathsf{T}\mathbf{n}.
\end{aligned}
\label{eq:L-spatial}
\end{equation}

% Since the objective induces a strongly convex–concave game and the feasible region is compact, the optimal solution is uniquely characterized by the KKT conditions.

\color{black}
The additional spatial constraints preserve convexity of the transmit-power feasible set. The objective remains strictly concave in the transmit-power variables and convex in the interference variables. As in the frequency-domain Gaussian AWF, the interference minimizer is supported only on active transmit channels, over which the interference subproblem is strictly convex. Hence the Gaussian AWF over space also admits a unique saddle point, which is characterized by the KKT conditions.
\color{black}

\begin{theorem}[Adversarial Water-filling over Space]
\label{thm:spatial-pwf}
Let $(\mathbf p^\star,\mathbf n^\star)$ denote an optimal solution of \eqref{eq:spatial-awf}. 
Then the optimal transmit power allocation satisfies
\begin{equation}
p_i^\star
=
\max\!\left(
\frac{1}{\nu^\star+(\mathbf A^\mathsf{T}\boldsymbol{\theta}^\star)_i}
-\frac{\sigma_i+n_i^\star}{\beta_i},0
\right), \quad \forall i
\end{equation}
where $\nu^\star$ and $\boldsymbol{\theta}^\star\ge0$ are the dual variables associated with the total power constraint $\mathbf{1}^\mathsf{T}\mathbf p=P$ and the linear constraint $\mathbf A\mathbf p\le \hat{\mathbf p}$, respectively.
\end{theorem}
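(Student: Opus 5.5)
We fix the interference at its saddle value $\mathbf n^\star$ and work only with the transmit-side subproblem. By the saddle property established just before Theorem~\ref{thm:spatial-pwf}, $\mathbf p^\star$ maximizes $f(\mathbf p,\mathbf n^\star)=\sum_i\log\!\bigl(1+\beta_i p_i/(\sigma_i+n_i^\star)\bigr)$ over the polytope $\{\mathbf p\ge0,\ \mathbf 1^\mathsf T\mathbf p=P,\ \mathbf A\mathbf p\le\hat{\mathbf p}\}$. This is a concave maximization under affine constraints. Assuming the feasible set is nonempty (as in the setup), the refined Slater condition for linear constraints gives strong duality. Hence KKT multipliers $\nu^\star\in\mathbb R$, $\boldsymbol\theta^\star\ge0$, $\mathbf b^\star\ge0$ exist for the Lagrangian \eqref{eq:L-spatial} restricted to its $\mathbf p$-dependent terms. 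The KKT system is therefore necessary and sufficient at $\mathbf p^\star$, and no constraint qualification beyond feasibility is needed.

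Next we write the stationarity condition coordinatewise. Differentiating \eqref{eq:L-spatial} in $p_i$ gives
\begin{equation*}
\frac{\beta_i}{\sigma_i+n_i^\star+\beta_i p_i^\star}-\nu^\star-(\mathbf A^\mathsf T\boldsymbol\theta^\star)_i+b_i^\star=0,
\end{equation*}
together with $b_i^\star p_i^\star=0$, $b_i^\star\ge0$, and $\theta_s^\star(\mathbf A\mathbf p^\star-\hat{\mathbf p})_s=0$. Write $\lambda_i:=\nu^\star+(\mathbf A^\mathsf T\boldsymbol\theta^\star)_i$ for the effective per-channel price. We then split into two cases.

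\emph{Case $p_i^\star>0$.} Here $b_i^\star=0$. Stationarity gives $\beta_i/(\sigma_i+n_i^\star+\beta_ip_i^\star)=\lambda_i$, which forces $\lambda_i>0$. Solving yields $p_i^\star=1/\lambda_i-(\sigma_i+n_i^\star)/\beta_i$, and this quantity is positive, so it equals the stated $\max(\cdot,0)$.

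\emph{Case $p_i^\star=0$.} Stationarity reads $\beta_i/(\sigma_i+n_i^\star)=\lambda_i-b_i^\star\le\lambda_i$. Hence $\lambda_i>0$ and $1/\lambda_i\le(\sigma_i+n_i^\star)/\beta_i$, so the $\max$ expression also evaluates to $0$.

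The main subtlety is the sign of $\lambda_i$. It must be positive for $1/\lambda_i$ to be meaningful and for the water-level interpretation to hold. The case analysis above settles this for every channel, since $\beta_i>0$ and $\sigma_i+n_i^\star>0$ make the left-hand side strictly positive. This holds even though $\nu^\star$ is free in sign as the multiplier of an equality constraint.

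Finally, we note that the multipliers $(\nu^\star,\boldsymbol\theta^\star)$ may not be unique when constraints are degenerate. The formula holds for any KKT pair, which is all the theorem asserts. Uniqueness of $\mathbf p^\star$ itself follows from strict concavity in $\mathbf p$, as noted before the theorem.
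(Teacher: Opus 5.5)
Your proposal is correct and follows the paper's proof. You fix $\mathbf n=\mathbf n^\star$, use the KKT conditions of the concave transmit subproblem over the polytope, and split into $p_i^\star>0$ (where $b_i^\star=0$ gives the formula) and $p_i^\star=0$ (where the bracket is nonpositive). Your additions are details the paper leaves implicit: you justify via the saddle property that $\mathbf p^\star$ solves the subproblem, you attribute KKT necessity to the linearity of the constraints, and you show $\nu^\star+(\mathbf A^\mathsf{T}\boldsymbol\theta^\star)_i>0$, so the reciprocal is well defined.
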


\begin{proof}
Fix $\mathbf n=\mathbf n^\star$. The transmit power allocation problem
\begin{equation}
\max_{\mathbf p\ge0,\ \mathbf{1}^\mathsf{T}\mathbf p=P,\ \mathbf A\mathbf p\le\hat{\mathbf p}}
\sum_{i=1}^m \log\!\left(1+\frac{\beta_i p_i}{\sigma_i+n_i^\star}\right)
\end{equation}
is a strictly concave maximization over a polytope and therefore admits a unique optimizer. The KKT conditions are necessary and sufficient.

From the Lagrangian \eqref{eq:L-spatial}, the stationarity condition with respect to $p_i$ for each channel $i$ gives
\begin{equation}
\frac{\beta_i}{\sigma_i+n_i^\star+\beta_i p_i^\star}
=\nu^\star+(\mathbf A^\mathsf{T}\boldsymbol{\theta}^\star)_i-b_i^\star .
\end{equation}

If $p_i^\star>0$, then $b_i^\star=0$, yielding
\begin{equation}
p_i^\star
=
\frac{1}{\nu^\star+(\mathbf A^\mathsf{T}\boldsymbol{\theta}^\star)_i}
-\frac{\sigma_i+n_i^\star}{\beta_i}.
\end{equation}

If $p_i^\star=0$, the right-hand side is nonpositive. Combining both cases yields the water-filling expression.
\end{proof}

The linear constraints induce channel-dependent effective transmit power water levels for each active channel $i$
\begin{equation}
{\nu_A}_i^\star
\triangleq
\frac{1}{\nu^\star+(\mathbf A^\mathsf{T}\boldsymbol{\theta}^\star)_i}.
\end{equation}
When $\boldsymbol{\theta}^\star=0$, the global water level $\nu^\star$ is recovered, as shown in Fig.~\ref{fig:spatial_p}. The interference power subproblem
\begin{equation}
\min_{\mathbf n\ge 0,\ \mathbf{1}^\mathsf{T}\mathbf n=N}
\sum_{i=1}^m
\log\!\left(1+\frac{\beta_i p_i^\star}{\sigma_i+n_i}\right)
\end{equation}
is subject only to the simplex constraint. Consequently, the optimal interference
allocation admits the inverse water-filling solution given in (\ref{eq:n_op}). 

\begin{figure}
    \centering
    \includegraphics[width=0.95\linewidth]{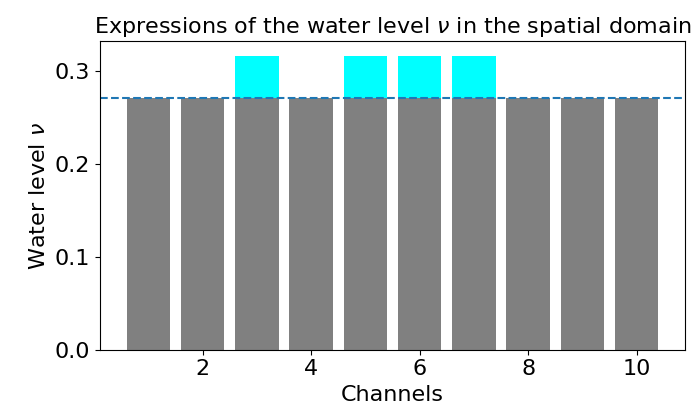}
    \color{black}
\caption{
Channel-wise effective water levels induced by the constraints. 
The total bar heights represent the inverse effective levels $1/{\nu_A}_i^\star$, while cyan segments correspond to the constraint-induced shift $(\mathbf A^\top\boldsymbol{\theta})_i$. 
The dashed line denotes the global water level $\nu^\star$.
}
\vspace{-3mm}
    \label{fig:spatial_p}
\end{figure}

\subsubsection*{PDHG on Gaussian Water-filling}
\label{subsec:primal_dual}
\color{black}
The adversarial water-filling problem over space in
\eqref{eq:spatial-awf} forms a saddle point problem that is
concave in the maximizing variable and convex in the minimizing
variable. Standard primal--dual methods such as PDHG provide a
useful interpretation of this structure. In particular, the dual
variables act as water levels enforcing the coupled resource
constraints, which motivates the learned primal--dual dynamics
developed later.
\color{black}

Consider the relaxation of a resource allocation problem
\begin{equation}
\min_{\mathbf{x}}\;
\sum_{i=1}^m f(x_i) + \mathbb{I}_C(\mathbf{x}),
\end{equation}
where $C = \left\{\mathbf{x} \in \mathbb{R}^m \mid \mathbf{1}^\mathsf{T}\mathbf{x} = b,\, \mathbf{x} \ge 0\right\}$.
Here $\mathbb{I}_C(\mathbf{x})$ is an indicator function of $C$, taking value $0$ if $\mathbf{x}\in C$ and $+\infty$ otherwise.
For the separable part $f(\mathbf{x})=\sum_{i=1}^m f(x_i)$, the proximal operator decomposes coordinate-wise for each channel $i$ with proximal stepsize $\tau>0$ \cite{parikh2014proximal}:
\begin{equation}
[\mathrm{prox}_{\tau f}(\mathbf{y})]_i
=
\arg\min_{x_i}
\left\{
f(x_i)
+
\frac{1}{2\tau}(x_i-y_i)^2
\right\}.
\end{equation}

Applying this framework to Gaussian water-filling, the power allocation subproblem for fixed interference $\mathbf{n}$ becomes
\begin{equation}
\min_{\mathbf{p}}
-
\sum_{i=1}^m
\log\!\left(1+\frac{\beta_i p_i}{\sigma_i+n_i}\right)
+
\mathbb{I}_{\tilde C}(\mathbf{p}),
\end{equation}
with $\tilde C=\{\mathbf{p}\mid \mathbf{p}\ge0,\ \mathbf{1}^\mathsf{T}\mathbf{p}=P\}$.
\color{black}
The total-power constraint couples the channel variables, and this
coupling is handled through its scalar dual variable, yielding
coordinate-wise primal proximal updates coupled through a common
dual variable that acts as the global water level.
\color{black}

Under the PDHG framework \cite{chambolle2011first,goldstein2013adaptive}, the spectrum-domain formulation can be written as 
\begin{equation}
\min_{\mathbf{p}\ge0}\max_{\nu}
-
\sum_{i=1}^m
\log\!\left(1+\frac{\beta_i p_i}{\sigma_i+n_i}\right)
+
\nu\!\left(\sum_{i=1}^m p_i-P\right).
\end{equation}

The corresponding PDHG updates take the form
\begin{align}
\nu^{(k+1)}
&=
\nu^{(k)}+\alpha\!\left(\sum_{i=1}^m p_i^{(k)}-P\right),\\
p_i^{(k+1)}
&=
\mathrm{prox}_{\tau f_i}\!
\left(
p_i^{(k)}-\tau\nu^{(k+1)}
\right),
\end{align}
where
$f(p_i)=-\log\!\left(1+\frac{\beta_i p_i}{\sigma_i+n_i}\right), \forall i$, and $\alpha,\tau>0$ denote the dual and primal stepsizes respectively.
The dual update corresponds to the proximal step associated with the conjugate of the constraint indicator and reduces to an affine update in this setting. The dual variable $\nu$ therefore parameterizes the global water level.

When additional linear constraints $\mathbf{A}\mathbf{p}\le\hat{\mathbf{p}}$ are present, the saddle problem introduces an additional dual variable $\boldsymbol{\theta}$. 
The resulting primal update becomes
\begin{equation}
p_i^{(k+1)}
=
\mathrm{prox}_{\tau f}
\!\left(
p_i^{(k)}
-
\tau
\big(
\nu^{(k+1)}+(\mathbf{A}^\mathsf{T}\boldsymbol{\theta}^{(k+1)})_i
\big)
\right), \quad \forall i
\end{equation}
showing that the effective water level is determined jointly by the global dual variable $\nu$ and the dual contribution $\mathbf{A}^\mathsf{T}\boldsymbol{\theta}$ associated with the additional linear constraints.

\subsection{Mercury/water-filling}

We now replace the Gaussian inputs with a general mutual-information objective.
Let
\begin{equation}
f(\mathbf p,\mathbf n)=
\sum_{i=1}^{m}
I_i\!\left(\frac{\beta_i p_i}{\sigma_i+n_i}\right),
\end{equation}
where $I_i(\gamma)$ denotes the mutual information (in nats/use) of channel $i$
at effective SINR $\gamma$. The AWF over space for channels under mercury/water-filling is
\begin{equation}
\max_{\mathbf p\ge 0,\ \mathbf{1}^\mathsf{T}\mathbf p=P,\ \mathbf A\mathbf p\le \hat{\mathbf p}}
\ \min_{\mathbf n\ge 0,\ \mathbf{1}^\mathsf{T}\mathbf n=N}
\sum_{i=1}^{m}
I_i\!\left(\frac{\beta_i p_i}{\sigma_i+n_i}\right).
\label{eq:spatial-awf-mercury}
\end{equation}

Under standard regularity conditions on $I(\cdot)$, we characterize stationary points of the minimax problem through the KKT conditions.
Introducing dual variables $\nu$ for $\mathbf{1}^\mathsf{T}\mathbf p=P$
and $\boldsymbol{\theta}\ge 0$ for $\mathbf A\mathbf p\le \hat{\mathbf p}$, Theorem~\ref{thm:spatial-pwf} extends directly to imply that, for any $i\in\{1,\dots,m\}$ with $p_i^\star>0$,
\begin{equation}
\frac{\beta_i}{\sigma_i+n_i^\star}
I_i'\!\left(\frac{\beta_i p_i^\star}{\sigma_i+n_i^\star}\right)
=
\nu^\star+(\mathbf A^\mathsf{T}\boldsymbol{\theta}^\star)_i.
\label{eq:p_mercury_spatial}
\end{equation}
Since the feasible set for interference power is unchanged,
for any $i\in\{1,\dots,m\}$ with $n_i^\star>0$, we obtain
\begin{equation}
\frac{\beta_i p_i^\star}{(\sigma_i+n_i^\star)^2}
I_i'\!\left(\frac{\beta_i p_i^\star}{\sigma_i+n_i^\star}\right)
=
-\mu^\star.
\label{eq:n_mercury_spatial}
\end{equation}
where $\mu^\star$ enforces $\mathbf{1}^\mathsf{T}\mathbf n=N$.
Using the I--MMSE identity as in \eqref{eq:mmse}, the AWF over space for channels under mercury/water-filling admits the same water level interpretation as in the Gaussian case, with channel-dependent dual shifts induced by $(\mathbf A^\mathsf{T}\boldsymbol{\theta}^\star)_i$.

\subsubsection*{Mercury/water-filling and Proximal}
We now extend the proximal interpretation from Gaussian water-filling to mercury/water-filling. When the logarithmic objective is replaced by a general mutual-information function,
\begin{equation}
f(p_i)
=
-
I_i\!\left(
\frac{\beta_i p_i}{\sigma_i+n_i}
\right), \forall i
\end{equation}
the problem remains separable across channels and continues to admit a primal--dual formulation.

However, the key difference lies in the form of the proximal operator.
In the Gaussian case, the logarithmic objective leads to a rational derivative, yielding a quadratic optimality condition and hence a closed-form proximal update.

In contrast, under mercury/water-filling, the mutual information $I(\gamma)$ induces a nonlinear dependence through the I--MMSE relationship. 
As a result, the proximal operator
$\mathrm{prox}_{\tau f}(v)$ no longer admits a  closed-form solution. 
Instead, each update requires solving a one-dimensional nonlinear equation of the form
\begin{equation}
p_i + \tau f'(p_i) = v, \quad \forall i,
\end{equation}
where
\begin{equation}
f'(p_i)
=
-
\frac{\beta_i}{\sigma_i+n_i}
I_i'\!\left(
\frac{\beta_i p_i}{\sigma_i+n_i}
\right), \quad \forall i.
\end{equation}

This leads to an implicit update that must be computed numerically for each channel. 
While each subproblem is one-dimensional, each primal--dual iteration still requires the solution of multiple per-channel nonlinear equations. 
This can become computationally burdensome in large-scale systems or in scenarios with heterogeneous input constellations, where the MMSE functions differ across channels.
This loss of analytical tractability distinguishes mercury/water-filling from Gaussian water-filling and motivates incorporating an extragradient-style update within the foundation model in the next section.

\section{Foundation Model Architecture}
\label{sec:method}

The computational complexity of mercury/water-filling motivates replacing explicit proximal updates with learned primal--dual iterative dynamics. Unlike Gaussian water-filling, the associated updates generally do not admit closed-form solutions and instead require solving nonlinear implicit equations. This motivates a \textcolor{black}{domain-specific wireless} foundation model that learns the underlying AWF  dynamics in a data-driven manner and can be interpreted as a learned generalization of primal--dual first-order methods.
% \color{black}
% The intended generalization is across heterogeneous
% instances within the AWF family, rather than across arbitrary
% wireless optimization tasks.
% \color{black}

% [Modified 1] Add Gaussian special-case sentence in the opening.
The \textcolor{black}{domain-specific wireless} foundation model for AWF mirrors the KKT of adversarial water-filling: permutation-invariant encoding captures channel symmetry, graph neural message passing models sparse linear interactions induced by constraints, and learned primal--dual iterations approximate the coupled dynamics underlying water level coordination (cf. Fig.~\ref{fig:model}). Although primarily motivated by mercury/water-filling with discrete constellations, the framework ultimately applies to both Gaussian and mercury/water-filling settings, providing a unified and scalable approach. The Gaussian case is recovered as a special instance by choosing $I_i(\gamma)=\log(1+\gamma), \quad \forall i$.

\subsection{Architectural Motivation: Sets and Graphs}

The water-filling problem exhibits several properties that guide the foundation model design. 
These include permutation invariance across channels, variable problem dimension, couplings induced by linear constraints, and low-dimensional global coordination through the KKT system. 
In particular, the channel parameters $\{(\beta_i,\sigma_i)\}_{i=1}^m$ form an unordered set, and the problem is invariant to any permutation of channel indices. 
This motivates the use of permutation-invariant set representations.

To model global interactions among channels, we employ a Perceiver-style encoder \footnote{The Perceiver, introduced in \cite{jaegle2021perceivergeneralperceptioniterative}, is a transformer-style architecture that maps variable-size inputs into a small latent array through cross-attention, followed by latent self-attention. It is adopted here as an efficient mechanism for capturing global dependencies across channels.}, in which a small set of learnable latent tokens first aggregates information from the input channel set and then exchanges information internally \cite{jaegle2021perceivergeneralperceptioniterative}. 
This fixed-size latent representation provides a compact summary of cross-channel competition and remains scalable as the number of channels varies.
Linear constraints introduce couplings across channels. 
In particular, the constraint $\mathbf{Ap} \le \mathbf{\hat p}$ induces a natural bipartite graph between channel nodes and constraint nodes, which motivates a GNN-based module. 
Through message passing, constraint nodes aggregate and propagate information across related channels, approximating the coupling term $\mathbf A^\mathsf{T}\boldsymbol{\theta}$ in the KKT conditions.
Finally, the KKT system shows that global coordination is governed by a small number of optimal dual variables that act as water levels. 
We capture this structure through global latent representations that parameterize these water-level dynamics.
By combining permutation-invariant set encoding with graph-based constraint propagation, this architecture incorporates the main features of AWF and supports generalization across different channel dimensions and constraint patterns.
\color{black}
For the linear constraints considered here, each constraint
couples only the channel variables associated with its nonzero
entries. A bipartite message-passing representation therefore
provides a direct and scalable encoding of these interactions,
with computation tied to the graph connectivity. The
Perceiver-style encoder complements this local constraint
representation by providing global attention-based information
exchange across channels. Alternative attention-based graph
representations, including Graph Transformers, are evaluated
in Section~VII-G.
\color{black}

\subsection{Problem Formulation}

We consider the AWF problem over space under mercury/water-filling introduced in Section~\ref{sec:spatial-awf}. 
\textcolor{black}{Given channel parameters $\{(\beta_i,\sigma_i)\}_{i=1}^m$, total power budgets $P,N$, 
and linear constraints $\mathbf A\mathbf p\le \hat{\mathbf p}$, 
the objective is the same as (\ref{eq:spatial-awf-mercury}). 
The Gaussian case is recovered as a special instance by choosing $I_i(\gamma)=\log(1+\gamma),\quad \forall i$.}

% [Modified 3] Clarify "active channels" in place.
For discrete constellations, the derivative $I_i'(\gamma)$ can be characterized via the I--MMSE identity~\cite{guo2005mutual}, which provides a convenient basis for gradient-based optimization. The KKT conditions imply that for all active channels $i$ with $p_i>0$,
\begin{equation}
I_i'(\gamma_i)\frac{\beta_i}{\sigma_i+n_i}
=
\nu + (\mathbf A^\mathsf{T}\boldsymbol{\theta})_i,
\label{eq:model}
\end{equation}
where $\nu$ and $\boldsymbol{\theta}$ are dual variables associated with the total budget and linear constraints and $\gamma_i = \frac{\beta_ip_i}{\sigma_i+n_i}, \quad \forall i$. Thus, coordination across channels is governed by a low-dimensional set of dual variables, with $\nu$ acting as a global water level and $\mathbf A^\top\boldsymbol{\theta}$ inducing channel-wise level shifts.

\subsection{Foundation Model for AWF}
We propose a \color{black}
domain-specific wireless \color{black}
foundation model for AWF that learns to emulate the underlying
solution dynamics, including the associated water-level search,
across varying channel dimensions, constraint patterns, and
modulation formats. Rather than learning a direct channel-to-power mapping for a fixed problem size, the model
\color{black}
learns reusable optimization dynamics shared across related AWF configurations.
This structured pretraining-and-reuse setting enables zero-shot reuse under varying problem configurations, distinguishing the proposed model from a fixed-size task-specific predictor.

% We propose a foundation model that learns to approximate the dynamics and, implicitly, the associated water level search across varying channel dimensions, constraint patterns, and modulation formats.
% Rather than learning a direct mapping from channels to powers, the model learns the algorithmic prior that generalizes across problem scales and distributions, forming a domain-specific foundation model for AWF problems.

\begin{figure*}
    \centering
    \includegraphics[width=0.92\linewidth]{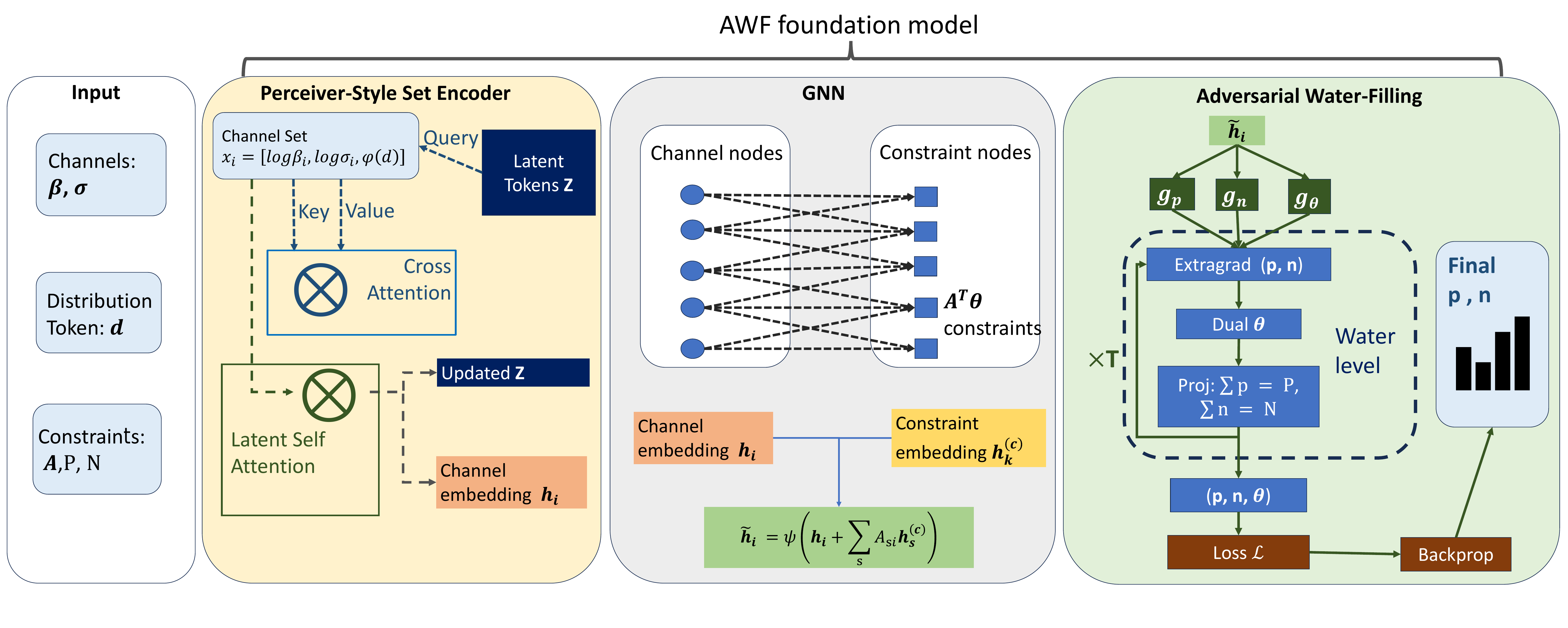}
\caption{
\textcolor{black}{Architecture of the AWF foundation model. A permutation-invariant encoder represents variable-size channel sets, a bipartite GNN captures constraint coupling, and learned projected primal--dual iterations produce the final allocations.}}
\vspace{-3mm}
    \label{fig:model}
\end{figure*}

\subsubsection{Distribution Token}

We sample $L$ logarithmically spaced SINR points $\{\gamma_j\}_{j=1}^{L}$ and construct a modulation-dependent feature vector
\begin{equation}
\mathbf d_i =
\left[
I_i'(\gamma_1), \dots, I_i'(\gamma_L),
I_i(\gamma_1), \dots, I_i(\gamma_L)
\right],
\end{equation}
which provides a compact numerical representation of the mutual-information curve associated with the modulation format assigned to channel $i$.
\color{black}
We refer to this feature vector as a distribution token, since it is treated as one input token by the learning model. 
Specifically, the entries $I(\gamma_j)$ describe the value of the mutual information at representative SINR points, while $I'(\gamma_j)$ captures the local variation of the curve. 
Thus, $\mathbf d$ encodes the modulation-dependent shape of the mutual-information function and is used as an input feature for the learning model.
The values of $I_i'(\gamma)$ are precomputed on an SINR grid, and values at arbitrary SINR arguments are obtained by interpolation.

\subsubsection{Set Encoding and Global Water Level Representation}

Each channel is represented by
\begin{equation}
\mathbf x_i =
\left[
\log \beta_i,\;
\log \sigma_i,\;
\phi(\mathbf d_i)
\right],
\end{equation}
where $\phi(\cdot)$ denotes a learned embedding of the distribution token. 
The unordered channel set $\{\mathbf x_i\}_{i=1}^m$ is processed by a permutation-invariant attention-based encoder
\begin{equation}
(\mathbf h, \mathbf Z) = \mathrm{SetEncoder}(\mathbf x),
\end{equation}
which produces channel embeddings $\mathbf h_i$ together with a small set of learnable latent tokens $\mathbf Z$.

The encoder uses a Perceiver-style architecture in which a small number of latent tokens $\mathbf Z$ first gather information from the channel set through cross-attention and then exchange information through latent self-attention. 
This latent bottleneck compresses a variable-size channel set into a fixed-size latent set $\mathbf Z$, thereby capturing cross-channel competition while remaining scalable as the number of channels varies. 
The latent set $\mathbf Z$ summarizes set-level statistics and coordinates channel interactions. 
From an optimization perspective, $\mathbf Z$ plays a role analogous to the low-dimensional dual state in the KKT system and helps capture the effective water-level adjustment that equalizes marginal utility across channels.

\subsubsection{Constraint-Aware Message Passing via GNN}
Linear constraints introduce interactions between subsets of channels. 
We model these interactions using a GNN constructed from the constraint matrix $\mathbf A \in \mathbb{R}^{S \times m}$. 
Specifically, we construct a bipartite graph with two types of nodes: channel nodes and constraint nodes. 
Each channel node represents one channel, or equivalently one optimization variable associated with that channel. 
Each constraint node represents one linear constraint, corresponding to one row of $\mathbf A$. 
An edge is added between channel node $i$ and constraint node $s$ whenever $A_{si} \neq 0$, indicating that channel $i$ appears in constraint $s$ with a nonzero coefficient. 
In this way, channels involved in the same linear constraint are connected through the corresponding constraint node, allowing the GNN to capture constraint-induced interactions among channels.

The GNN propagates information between channel nodes and constraint nodes through message passing. 
Message passing is performed in two stages.
First, each constraint node aggregates information from its incident channels to form a constraint embedding
\begin{equation}
\mathbf h^{(c)}_s =
\psi\!\left(
\sum_{i=1}^{m} A_{s i} \mathbf h_i
\right),
\end{equation}
where $\mathbf h_i$ denotes the channel embedding produced by the set encoder and $\psi(\cdot)$ is a learnable transformation. 
This aggregation summarizes how the channel representations contribute to each constraint.

Second, the constraint embeddings are propagated back to the channel nodes,
\begin{equation}
\tilde{\mathbf h}_i =
\psi\!\left(
\mathbf h_i + \sum_{s=1}^{S} A_{s i} \mathbf h^{(c)}_s
\right),
\end{equation}
so that each channel receives feedback from the constraints in which it participates.

This GNN-based message passing mechanism enables constraint-induced interactions to propagate across channels while preserving the structure of $\mathbf A$. 
From an optimization perspective, the feedback term approximates the coupling $\mathbf A^\top \boldsymbol{\theta}$ in the KKT conditions, where $\boldsymbol{\theta}$ denotes the dual variables associated with the linear constraints. 
Consequently, the graph module provides a scalable learnable approximation to the constraint-induced channel-wise coupling and the resulting level shifts.
% [Modified 9] Keep the update description consistent with the caption and wording.
\subsubsection{Primal--Dual Iterations and Water-Level Search}

Initial allocations are predicted from the channel embeddings
\begin{equation}
\mathbf p^{(0)} = P \cdot \mathrm{softmax}\!\big(g_p(\tilde{\mathbf h})\big),
\qquad
\mathbf n^{(0)} = N \cdot \mathrm{softmax}\!\big(g_n(\tilde{\mathbf h})\big).
\end{equation}

Starting from these initial allocations, the model performs $T$ learned primal--dual iterations that update the primal variables $(\mathbf p,\mathbf n)$ together with the dual variable $\boldsymbol{\theta}$ associated with the linear constraints. 
Let $\gamma_i = \frac{\beta_i p_i}{\sigma_i+n_i}$ denote the SINR of channel $i$. \color{black}
During iteration $t$, we use
$\gamma_i^{(t)}
=\beta_i p_i^{(t)}/(\sigma_i+n_i^{(t)})$
to denote the current SINR.
\color{black}
Using the I--MMSE identity, the gradient components of the saddle objective are given by
\begin{equation}
g_{p_i} =
I_i'(\gamma_i)\frac{\beta_i}{\sigma_i+n_i}
-
(\mathbf A^\top \boldsymbol{\theta})_i,
\end{equation}

\begin{equation}
g_{n_i} =
-
I_i'(\gamma_i)
\frac{\beta_i p_i}{(\sigma_i+n_i)^2},
\end{equation}
where $g_{p_i}$ and $g_{n_i}$ denote the gradient components with respect to $p_i$ and $n_i$, respectively. 
The dual gradient is given by
\begin{equation}
\mathbf g_{\theta} = \mathbf A\mathbf p - \hat{\mathbf p}.
\end{equation}

Let $\mathbf z=(\mathbf p,\mathbf n,\boldsymbol{\theta})$ denote the stacked primal--dual variables and let 
$\mathbf G(\mathbf z)=(\mathbf g_p,-\mathbf g_n,\mathbf g_\theta)$ denote the saddle-gradient field. 
The model performs a projected extragradient update
\begin{equation}
\mathbf z^{t+\frac12}
=
\Pi_{\mathcal X}\!\left(
\mathbf z^t + \mathbf D^t \mathbf G(\mathbf z^t)
\right),
\end{equation}
followed by the corrected step
\begin{equation}
\mathbf z^{t+1}
=
\Pi_{\mathcal X}\!\left(
\mathbf z^t + \mathbf D^t \mathbf G(\mathbf z^{t+\frac12})
\right),
\end{equation}
\color{black}
where $\mathbf D^t=\operatorname{blkdiag}
(\alpha_p^t\mathbf I_m,\alpha_n^t\mathbf I_m,
\alpha_\theta^t\mathbf I_S)$ is a learned positive block-scalar
stepsize matrix, so that all coordinates within each variable block
share the same learned stepsize.
Here, $\Pi_{\mathcal X}$ denotes the fixed deterministic blockwise feasibility projection onto the feasible product set
$\mathcal X=\Delta_P\times\Delta_N\times\mathbb R_{\ge0}^{S}$.
The associated operations contain no learnable parameters. For the primal variables, they preserve
nonnegativity and the prescribed budgets
$\mathbf 1^\mathsf{T}\mathbf p=P$ and
$\mathbf 1^\mathsf{T}\mathbf n=N$, while
$\boldsymbol{\theta}$ is restricted to the nonnegative orthant.
These projections implicitly adjust the effective water level across channels, analogous to the water level search in water-filling problems. The dual variable
$\boldsymbol{\theta}$ incorporates the linear constraint
$\mathbf A\mathbf p\le \hat{\mathbf p}$ through its constraint-aware
update, yielding an iteration
\textcolor{black}{that approximates the primal--dual dynamics of the AWF problem.}
\color{black}

Through training across heterogeneous instances, the foundation model learns an update rule that drives the iterates toward near-stationary water levels under varying problem sizes, constraint types, and modulation distributions.

\begin{algorithm}[t]
\caption{Foundation Model for Adversarial Water-Filling}
\label{alg:awf_foundation}
\begin{algorithmic}[1]

\REQUIRE Channel parameters $\{(\beta_i,\sigma_i)\}_{i=1}^m$, power budgets $(P,N)$, constraint matrix $(\mathbf A,\hat{\mathbf p})$, mutual-information tables $(I,I')$, iteration number $T$

\ENSURE Power allocations $(\mathbf p,\mathbf n)$

\STATE Construct $\mathbf d_i = [I_i'(\gamma_j), I_i(\gamma_j)]_{j=1}^{L}$ for each channel

\STATE Form channel features 
$\mathbf x_i = [\log\beta_i,\ \log\sigma_i,\ \phi(\mathbf d_i)]$

\STATE Compute channel embeddings and latent global state
$(\mathbf h,\mathbf Z) = \mathrm{SetEncoder}(\mathbf x)$

\STATE Compute constraint embeddings
$\mathbf h_s^{(c)} = \psi\!\left(\sum_{i=1}^{m} A_{si} \mathbf h_i\right)$

\STATE Update channel embeddings using the constraint GNN
$\tilde{\mathbf h}_i = \psi\!\left(\mathbf h_i + \sum_{s=1}^{S} A_{si} \mathbf h_s^{(c)}\right)$

\STATE Initialize allocations
$\mathbf p^{(0)} = P \cdot \mathrm{softmax}(g_p(\tilde{\mathbf h}))$,
$\mathbf n^{(0)} = N \cdot \mathrm{softmax}(g_n(\tilde{\mathbf h}))$

\STATE Initialize dual variable $\boldsymbol{\theta}^{(0)} = \mathbf 0$

\FOR{$t = 0,1,\dots,T-1$}

\STATE Compute SINR
$\gamma_i^{(t)} = \frac{\beta_i p_i^{(t)}}{\sigma_i+n_i^{(t)}}$

\STATE Evaluate $I'(\gamma_i^{(t)})$ 

\STATE Compute gradient components
$g_{p_i}^{(t)} =
I_i'(\gamma_i^{(t)})\frac{\beta_i}{\sigma_i+n_i^{(t)}}-(\mathbf A^\top\boldsymbol{\theta}^{(t)})_i$,
$g_{n_i}^{(t)} = -I_i'(\gamma_i^{(t)})
\frac{\beta_i p_i^{(t)}}{(\sigma_i+n_i^{(t)})^2}$,
$\mathbf g_{\theta}^{(t)}
=\mathbf A\mathbf p^{(t)}-\hat{\mathbf p}$

\STATE Compute $(\bar{\mathbf p},\bar{\mathbf n},\bar{\boldsymbol{\theta}})=
\Pi_{\mathcal X}\!\left(
(\mathbf p^{(t)},\mathbf n^{(t)},\boldsymbol{\theta}^{(t)})
+
\mathbf D^{(t)}
(\mathbf g_p^{(t)},-\mathbf g_n^{(t)},\mathbf g_\theta^{(t)})
\right)$

\STATE Compute corrected SINR
$\bar\gamma_i = \frac{\beta_i \bar p_i}{\sigma_i+\bar n_i}$

\STATE Evaluate corrected gradient components using
$(\bar{\mathbf p},\bar{\mathbf n},\bar{\boldsymbol{\theta}})$

$\bar g_{p_i} =
I_i'(\bar\gamma_i)\frac{\beta_i}{\sigma_i+\bar n_i}
-(\mathbf A^\top\bar{\boldsymbol{\theta}})_i$

$\bar g_{n_i} =
-I_i'(\bar\gamma_i)
\frac{\beta_i \bar p_i}{(\sigma_i+\bar n_i)^2}$,
$\bar{\mathbf g}_\theta
=\mathbf A\bar{\mathbf p}-\hat{\mathbf p}$

\STATE Update $(p^{(t+1)},n^{(t+1)},\theta^{(t+1)}) =
\Pi_{\mathcal X}\!\left(
(p^{(t)},n^{(t)},\theta^{(t)})
+
D^{(t)}(\bar g_p,-\bar g_n,\bar g_\theta)
\right)$

% \STATE Enforce constraints
% $p^{(t+1)} \leftarrow \Pi_{\Delta_P}(p^{(t+1)})$,
% $n^{(t+1)} \leftarrow \Pi_{\Delta_N}(n^{(t+1)})$

% % \STATE Update dual variable
% % $\theta^{(t+1)} \leftarrow [\theta^{(t+1)} + D_\theta^{(t)} (Ap^{(t+1)}-\hat p)]_+$
% \STATE Project the dual variable $\theta^{(t+1)} \leftarrow \max(\theta^{(t+1)},0)$.

\ENDFOR

\RETURN $(\mathbf p^{(T)},\mathbf n^{(T)})$

\end{algorithmic}
\end{algorithm}

\subsection{Training Objective}

We train on random AWF instances across varying channel dimensions, sparse constraint graphs, and modulation types. Define the normalized mutual-information objective
\begin{equation}
J(\mathbf p,\mathbf n)
\triangleq
\frac{1}{m}
\sum_{i=1}^{m}
I_i\!\left(\frac{\beta_i p_i}{\sigma_i+n_i}\right).
\end{equation}

The training objective is motivated by primal--dual optimality conditions for minimax problems and by PDHG-type algorithms~\cite{chambolle2011first,goldstein2013adaptive}. 
Rather than solving each instance to optimality, we measure how close the current iterate $(\mathbf p,\mathbf n)$ is to a stationary point of the minimax problem using a tractable surrogate of the saddle residual based on a few projected ascent and descent steps. Specifically,
\begin{equation}
\mathbf p^{\uparrow}
\approx
\mathrm{Ascent}(\mathbf p;\mathbf n),
\qquad
\mathbf n^{\downarrow}
\approx
\mathrm{Descent}(\mathbf n;\mathbf p),
\end{equation}
where $\mathrm{Ascent}(\cdot)$ denotes a few projected updates that increase $J(\mathbf p,\mathbf n)$ with respect to $\mathbf p$, while $\mathrm{Descent}(\cdot)$ denotes a few projected updates that decrease $J(\mathbf p,\mathbf n)$ with respect to $\mathbf n$.

The resulting relative residual is
\begin{equation}
\mathcal{L}_{\mathrm{gap}} =
\frac{J(\mathbf p^{\uparrow}, \mathbf n) - J(\mathbf p, \mathbf n^{\downarrow})}
{|J(\mathbf p,\mathbf n)| + \epsilon}.
\end{equation}

To promote feasibility with respect to the linear constraints, we introduce a smooth penalty
\begin{equation}
\mathcal{L}_{\mathrm{ineq}} =
\left\|\max(\mathbf A\mathbf p-\hat{\mathbf p},\mathbf 0)\right\|_2^2 .
\end{equation}

We further incorporate a stationarity regularizer derived from the KKT optimality conditions on active channels,
\begin{equation}
\mathcal{L}_{\mathrm{kkt}}
=
\frac{1}{|\mathcal{A}_{+}|}
\sum_{i\in\mathcal{A}_{+}}
\left|
I_i'(\gamma_i)\frac{\beta_i}{\sigma_i+n_i}
-(\mathbf A^\top\boldsymbol{\theta})_i
-\nu
\right|,
\end{equation}
where $\mathcal{A}_{+}=\{i:p_i>\tau\}$ denotes the active channel set and $\nu$ is the global water level associated with the simplex constraint, as characterized by \eqref{eq:model}.

\color{black}
The overall training objective combines the residual loss, feasibility and KKT-based stationarity regularization.
\color{black}
\begin{equation}
\mathcal{L}
=
\mathcal{L}_{\mathrm{gap}}
+
\lambda_{\mathrm{ineq}}\mathcal{L}_{\mathrm{ineq}}
+
\lambda_{\mathrm{kkt}}\mathcal{L}_{\mathrm{kkt}} .
\end{equation}

The model is trained over heterogeneous AWF instances with varying channel dimensions, constraint graphs, and modulation distributions. 
As a result, the learned model generalizes not only across problem size and constraint types but also across distributional families, effectively learning the water level rather than a fixed-size regression mapping.

\section{Theoretical Analysis}
\label{sec:model_theory}
\color{black}
This section analyzes the learned projected extragradient
dynamics associated with the proposed AWF formulation. We focus
on their projected-stationarity consistency and conditional
local stability under the stated regularity assumptions. Applying these standard tools to the learned AWF
dynamics requires accounting for active-set-dependent
projection differentiability and the generally nonconvex
mercury/water-filling interference block.
\color{black}

Define the mutual-information objective
\begin{equation}
F(\mathbf p,\mathbf n)
=
\sum_{i=1}^{m} I_i(\gamma_i),
\qquad s.t. \quad
\gamma_i
=
\frac{\beta_i p_i}{\sigma_i + n_i}.
\end{equation}
We assume that $\sigma_i>0$ and $\beta_i>0$ for all $i$.
The primal feasible sets are $\Delta_P
=
\{\mathbf p\in\mathbb{R}^m_{\ge0}:\mathbf{1}^\mathsf{T}\mathbf p=P\}$, $\Delta_N
=
\{\mathbf n\in\mathbb{R}^m_{\ge0}:\mathbf{1}^\mathsf{T}\mathbf n=N\}$, together with the constraint $\mathbf A\mathbf p\le \hat{\mathbf p}$, $\mathbf A\in\mathbb{R}_{\ge0}^{S\times m}$.
Introduce the partial Lagrangian
\begin{equation}
\mathcal{L}(\mathbf p,\mathbf n,\boldsymbol{\theta})
=
F(\mathbf p,\mathbf n)
-\langle \boldsymbol{\theta},
\mathbf A\mathbf p-\hat{\mathbf p}\rangle,
\qquad
\boldsymbol{\theta}\in\mathbb{R}_{\ge0}^S .
\end{equation}

% [Modified] Use I_i' as the main notation and mention I--MMSE as a relation.
For discrete constellations, $I_i'(\gamma)$ can be related to the MMSE function through the I--MMSE identity. 
The gradients of the Lagrangian are
\begin{equation}
\nabla_{p_i}\mathcal{L}(\mathbf p,\mathbf n,\boldsymbol{\theta})
=
I_i'(\gamma_i)\frac{\beta_i}{\sigma_i+n_i}
-
(\mathbf A^\mathsf{T}\boldsymbol{\theta})_i,
\label{eq:grad_p_L}
\end{equation}
\begin{equation}
\nabla_{n_i}\mathcal{L}(\mathbf p,\mathbf n,\boldsymbol{\theta})
=
-I_i'(\gamma_i)
\frac{\beta_i p_i}{(\sigma_i+n_i)^2},
\label{eq:grad_n_L}
\end{equation}
and
\begin{equation}
\nabla_{\boldsymbol{\theta}}
\mathcal{L}(\mathbf p,\mathbf n,\boldsymbol{\theta})
=
-(\mathbf A\mathbf p-\hat{\mathbf p}).
\label{eq:grad_theta_L}
\end{equation}

Let $\mathbf z=(\mathbf p,\mathbf n,\boldsymbol{\theta})$,
and define the product constraint set
$\mathcal{X}
=
\Delta_P\times \Delta_N\times \mathbb{R}_{\ge0}^S.$
The saddle-gradient field is
\begin{equation}
\mathbf G(\mathbf z)
=
\big(
\nabla_{\mathbf p}\mathcal L(\mathbf z),\;
-\nabla_{\mathbf n}\mathcal L(\mathbf z),\;
-\nabla_{\boldsymbol{\theta}}\mathcal L(\mathbf z)
\big).
\label{eq:saddle_field}
\end{equation}

The learned model performs projected extragradient iterations of the form
\begin{equation}
\mathbf z^{k+\frac{1}{2}}
=
\Pi_{\mathcal X}\!\left(
\mathbf z^k + \mathbf D^k \mathbf G(\mathbf z^k)
\right),
\label{eq:eg_half}
\end{equation}
\begin{equation}
\mathbf z^{k+1}
=
\Pi_{\mathcal X}\!\left(
\mathbf z^k + \mathbf D^k \mathbf G(\mathbf z^{k+\frac12})
\right),
\label{eq:model_extragrad_map}
\end{equation}
% [Modified] Replace K by S in the stepsize matrix.
\color{black}
where $\mathbf D^k
=
\mathrm{blkdiag}(\alpha_p^k \mathbf I_m,
\alpha_n^k \mathbf I_m,
\alpha_\theta^k \mathbf I_S)$
is the positive block-scalar stepsize matrix
defined above predicted by the network and clamped to a bounded interval. 
\color{black}
% where $D^k
% =
% \mathrm{diag}(\alpha_p^k I_m,\alpha_n^k I_m,\alpha_\theta^k I_S)$ is a positive diagonal stepsize matrix predicted by the network and clamped to a bounded interval.

\color{black}
\begin{proposition}[Projected-stationarity consistency]
\label{prop:kkt_consistency}

Assume that each $I_i(\gamma)$ is continuously differentiable
over the considered SINR interval. Let
$\mathbf z^\star
=
(\mathbf p^\star,\mathbf n^\star,
\boldsymbol{\theta}^\star)
\in\mathcal X$,
and define
\begin{equation}
\mathbf D^\star
=
\operatorname{blkdiag}
\left(
\alpha_p^\star\mathbf I_m,\,
\alpha_n^\star\mathbf I_m,\,
\alpha_\theta^\star\mathbf I_S
\right),
\end{equation}
where
$\alpha_p^\star,\alpha_n^\star,\alpha_\theta^\star>0$.

If $\mathbf z^\star$ satisfies the first-order stationarity condition
\begin{equation}
\mathbf z^\star
=
\Pi_{\mathcal X}
\left(
\mathbf z^\star+
\mathbf D^\star
G(\mathbf z^\star)
\right),
\label{eq:projected_stationarity}
\end{equation}
then $\mathbf z^\star$ satisfies the first-order KKT conditions
of the AWF problem.
\end{proposition}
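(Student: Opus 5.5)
The plan is to use the variational characterization of Euclidean projection onto a closed convex product set. The set $\mathcal X=\Delta_P\times\Delta_N\times\mathbb R^S_{\ge0}$ is a product, and $\mathbf D^\star$ is block-scalar with the same block structure. Hence $\Pi_{\mathcal X}$ acts blockwise, and the fixed-point condition \eqref{eq:projected_stationarity} splits into three separate conditions:
\begin{equation*}
\mathbf p^\star=\Pi_{\Delta_P}\big(\mathbf p^\star+\alpha_p^\star\nabla_{\mathbf p}\mathcal L(\mathbf z^\star)\big),\quad
\mathbf n^\star=\Pi_{\Delta_N}\big(\mathbf n^\star-\alpha_n^\star\nabla_{\mathbf n}\mathcal L(\mathbf z^\star)\big),
\end{equation*}
together with $\boldsymbol\theta^\star=[\boldsymbol\theta^\star+\alpha_\theta^\star(\mathbf A\mathbf p^\star-\hat{\mathbf p})]_+$. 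For each block, $\mathbf x=\Pi_C(\mathbf y)$ holds if and only if $\mathbf y-\mathbf x\in N_C(\mathbf x)$, the normal cone. Since each $\alpha^\star>0$, we can divide by it. This gives $\nabla_{\mathbf p}\mathcal L(\mathbf z^\star)\in N_{\Delta_P}(\mathbf p^\star)$, $-\nabla_{\mathbf n}\mathcal L(\mathbf z^\star)\in N_{\Delta_N}(\mathbf n^\star)$, and $\mathbf A\mathbf p^\star-\hat{\mathbf p}\in N_{\mathbb R^S_{\ge0}}(\boldsymbol\theta^\star)$. The positive stepsizes therefore drop out entirely, which is why the statement holds for any $\alpha^\star$. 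Continuous differentiability of $I_i$, together with $\sigma_i>0$, ensures that $\mathbf G$ is well defined and that these gradients are the ones given in \eqref{eq:grad_p_L}--\eqref{eq:grad_theta_L}.

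The second step is to compute the normal cones explicitly. For the scaled simplex, $N_{\Delta_P}(\mathbf p)=\{\nu\mathbf 1-\mathbf b:\ \mathbf b\ge0,\ b_ip_i=0\}$. This is standard: $\Delta_P$ is a polyhedron $\{\mathbf 1^\mathsf T\mathbf p=P,\ \mathbf p\ge0\}$, so its normal cone is generated by the active constraint normals, with a free sign on the equality multiplier. Substituting yields a scalar $\nu^\star$ and $\mathbf b^\star\ge0$ satisfying
\begin{equation*}
I_i'(\gamma_i^\star)\frac{\beta_i}{\sigma_i+n_i^\star}-(\mathbf A^\mathsf T\boldsymbol\theta^\star)_i=\nu^\star-b_i^\star,
\end{equation*}
with $b_i^\star p_i^\star=0$. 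This is exactly stationarity of the Lagrangian \eqref{eq:L-spatial} in $\mathbf p$, and it reproduces \eqref{eq:p_mercury_spatial} on active channels. The same reasoning for $\mathbf n$ gives $\mu^\star$ and $\mathbf c^\star\ge0$ with $c_i^\star n_i^\star=0$ and $I_i'(\gamma_i^\star)\beta_ip_i^\star/(\sigma_i+n_i^\star)^2=-\mu^\star+c_i^\star$. This matches \eqref{eq:n_mercury_spatial} when $n_i^\star>0$; the sign convention for $\mu$ must be checked carefully against the Lagrangian.

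The third step handles the $\boldsymbol\theta$ block. The normal cone to the orthant at $\boldsymbol\theta^\star$ consists of vectors $\mathbf v\le0$ with $v_s=0$ wherever $\theta_s^\star>0$. This yields primal feasibility $\mathbf A\mathbf p^\star\le\hat{\mathbf p}$, dual feasibility $\boldsymbol\theta^\star\ge0$, and complementary slackness $\theta_s^\star(\mathbf A\mathbf p^\star-\hat{\mathbf p})_s=0$. Membership $\mathbf z^\star\in\mathcal X$ supplies the remaining primal feasibility: $\mathbf p^\star\ge0$, $\mathbf 1^\mathsf T\mathbf p^\star=P$, and the same for $\mathbf n^\star$. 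Collecting all pieces recovers the full first-order KKT system of the max--min problem \eqref{eq:spatial-awf-mercury}.

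The main obstacle is mostly careful bookkeeping rather than depth. One point is justifying the normal-cone formula for the simplex, including the sign of the free multiplier. The other is keeping the max/min sign conventions consistent, since the $\mathbf n$ and $\boldsymbol\theta$ components of $\mathbf G$ carry minus signs. I would state the projection characterization as a lemma for a general polyhedron $\{\mathbf x\ge0,\ \mathbf 1^\mathsf T\mathbf x=c\}$ and apply it twice. I would also remark that only first-order KKT conditions are claimed, not saddle optimality, because the mercury interference block may be nonconvex.
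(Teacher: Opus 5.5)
Your proposal follows essentially the same route as the paper's proof. The projection optimality condition gives $\mathbf D^\star\mathbf G(\mathbf z^\star)\in N_{\mathcal X}(\mathbf z^\star)$, and the block-scalar positive stepsizes drop out blockwise. The orthant inclusion then yields feasibility, dual feasibility and complementary slackness for $\mathbf A\mathbf p\le\hat{\mathbf p}$. Your explicit forms of $N_{\Delta_P}$ and $N_{\Delta_N}$ spell out what the paper asserts in one line.

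There is one bookkeeping slip, of exactly the kind you warned about. From $-\nabla_{\mathbf n}\mathcal L(\mathbf z^\star)\in N_{\Delta_N}(\mathbf n^\star)=\{\lambda\mathbf 1-\mathbf c:\ \mathbf c\ge 0,\ c_in_i^\star=0\}$ you get $I_i'(\gamma_i^\star)\beta_ip_i^\star/(\sigma_i+n_i^\star)^2=\lambda-c_i^\star$. Setting $\mu^\star=-\lambda$, this reads $-\mu^\star-c_i^\star$, not $-\mu^\star+c_i^\star$. This agrees with the $n_i$-stationarity $\nabla_{n_i}F-\mu-c_i=0$ of the Lagrangian \eqref{eq:L-spatial}, with $F$ in place of the logarithmic sum.

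Your version with $+c_i^\star$ is the stationarity condition for \emph{maximizing} $F$ over $\mathbf n$. It agrees with \eqref{eq:n_mercury_spatial} when $n_i^\star>0$. On channels with $n_i^\star=0$, however, it imposes the reversed inequality: it gives $\ge-\mu^\star$ where the correct condition is $\le-\mu^\star$. Fix this sign before assembling the KKT system.
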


\begin{proof}
By the optimality condition of Euclidean projection,
\eqref{eq:projected_stationarity} implies
\begin{equation}
\mathbf D^\star G(\mathbf z^\star)
\in
N_{\mathcal X}(\mathbf z^\star),
\end{equation}
where $N_{\mathcal X}$ denotes the normal cone.

Since $\mathbf D^\star$ acts as multiplication by a strictly positive scalar within each of the $\mathbf p$-, $\mathbf n$-, and $\boldsymbol{\theta}$-blocks, the corresponding blockwise normal-cone inclusions are invariant to
this positive scaling. Hence,
\begin{align}
\nabla_{\mathbf p}
L(\mathbf z^\star)
&\in
N_{\Delta_P}(\mathbf p^\star),\\
-\nabla_{\mathbf n}
L(\mathbf z^\star)
&\in
N_{\Delta_N}(\mathbf n^\star),\\
-\nabla_{\boldsymbol{\theta}}
L(\mathbf z^\star)
&\in
N_{\mathbb R_+^S}
(\boldsymbol{\theta}^\star).
\end{align}
The last inclusion is equivalent to primal feasibility, dual feasibility, and complementary slackness for $\mathbf A\mathbf p\leq\hat{\mathbf p}$. These relations are precisely the first-order KKT conditions.
\end{proof}

Proposition~\ref{prop:kkt_consistency} links projected
first-order stationarity to the KKT system of the AWF problem.
This implication holds whenever
\eqref{eq:projected_stationarity} is satisfied, although a fixed
point of the learned two-step extragradient map does not
necessarily satisfy this condition.
\color{black}

\begin{theorem}[Foundation Model AWF]
\label{thm:water_level}
Consider the simplex projection step in Line~15 of Algorithm~\ref{alg:awf_foundation}. 
Let $(\mathbf p^\star,\mathbf n^\star,\boldsymbol{\theta^\star})$ be a KKT point of the AWF. 
Then for every active channel $i\in\{1,\dots,m\}$ with $p_i^\star>0$, there exists a scalar $\nu^\star$ associated with the simplex constraint $\mathbf{1}^\mathsf{T}\mathbf p=P$ such that
\begin{equation}
I_i'(\gamma_i^\star)
\frac{\beta_i}{\sigma_i+n_i^\star}
=
\nu^\star + (\mathbf A^\mathsf{T}\boldsymbol\theta^\star)_i.
\end{equation}
\end{theorem}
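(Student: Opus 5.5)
The plan is to read the theorem as a statement about the normal cone of the simplex $\Delta_P$. I will combine it with the projected-stationarity characterization that Line~15 enforces through Proposition~\ref{prop:kkt_consistency}. The first step is to make precise what ``KKT point'' means in the present setting. Following the proof of Proposition~\ref{prop:kkt_consistency}, I will take it to mean the blockwise inclusions
\begin{equation*}
\nabla_{\mathbf p}\mathcal L(\mathbf z^\star)\in N_{\Delta_P}(\mathbf p^\star),\quad
-\nabla_{\mathbf n}\mathcal L(\mathbf z^\star)\in N_{\Delta_N}(\mathbf n^\star),\quad
-\nabla_{\boldsymbol\theta}\mathcal L(\mathbf z^\star)\in N_{\mathbb R_+^S}(\boldsymbol\theta^\star).
\end{equation*}
These are exactly the conditions obtained when $\mathbf z^\star$ is a fixed point of the projection $\Pi_{\mathcal X}$ used in Line~15 with the positive block-scalar stepsizes $\mathbf D^\star$; the positive scaling does not change the normal cones. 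Only the $\mathbf p$-block inclusion is needed. Its left-hand side is given explicitly by \eqref{eq:grad_p_L}.

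Next I will compute $N_{\Delta_P}(\mathbf p^\star)$ explicitly. Write $\Delta_P=\{\mathbf p:\mathbf 1^\mathsf T\mathbf p=P\}\cap\mathbb R^m_{\ge0}$. Both sets are polyhedral, so the normal cone of the intersection is the sum of the normal cones with no constraint qualification needed. This gives
\begin{equation*}
N_{\Delta_P}(\mathbf p^\star)=\{\nu\mathbf 1-\mathbf b:\ \nu\in\mathbb R,\ \mathbf b\ge0,\ b_i p_i^\star=0\ \forall i\}.
\end{equation*}
Alternatively, I can verify this directly from the variational inequality $\langle \mathbf g,\mathbf q-\mathbf p^\star\rangle\le0$ for all $\mathbf q\in\Delta_P$. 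Take two indices $i$ and $j$ with $p_i^\star>0$, and move a small mass $\epsilon$ from $p_i^\star$ to $p_j^\star$. This shows that $g_j\le g_i$ for every $j$. Hence all active coordinates share a common value, which I call $\nu^\star$, and every inactive coordinate is at most $\nu^\star$. The set of active coordinates is nonempty because $P>0$.

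Finally, I will substitute $\mathbf g=\nabla_{\mathbf p}\mathcal L(\mathbf z^\star)$ from \eqref{eq:grad_p_L}. On the active set the multiplier $b_i$ vanishes, so
\begin{equation*}
I_i'(\gamma_i^\star)\frac{\beta_i}{\sigma_i+n_i^\star}-(\mathbf A^\mathsf T\boldsymbol\theta^\star)_i=\nu^\star .
\end{equation*}
Rearranging gives the claimed identity. The value $\nu^\star$ is a single scalar independent of $i$, which is what ``there exists a scalar $\nu^\star$ associated with $\mathbf 1^\mathsf T\mathbf p=P$'' should mean.

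The main obstacle is interpretive rather than computational. The theorem ties the result to ``the simplex projection step in Line~15,'' yet the statement assumes $\mathbf z^\star$ is already a KKT point. I need to make clear that the water level $\nu^\star$ does not appear explicitly in the algorithm. It arises only as the Lagrange multiplier of the equality part of $\Delta_P$ hidden inside $\Pi_{\Delta_P}$. Concretely, $\Pi_{\Delta_P}(\mathbf y)_i=\max(y_i-\lambda,0)$, where $\lambda$ is chosen so that the entries sum to $P$. At a fixed point $\mathbf y=\mathbf p^\star+\alpha_p^\star\mathbf g$, each active coordinate gives $\alpha_p^\star g_i=\lambda$. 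So one may take $\nu^\star=\lambda/\alpha_p^\star$, which also shows that this level is the threshold searched by the projection. I will add this explicit thresholding remark so that the link to Line~15 is concrete. Differentiability of $I_i$ is needed only so that \eqref{eq:grad_p_L} is well defined, and this is supplied by the assumption in Proposition~\ref{prop:kkt_consistency}.
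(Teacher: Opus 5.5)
Your proposal is correct and takes essentially the same approach as the paper: the $\mathbf p$-block KKT condition, which you write as $\nabla_{\mathbf p}\mathcal L(\mathbf z^\star)\in N_{\Delta_P}(\mathbf p^\star)=\{\nu\mathbf 1-\mathbf b:\mathbf b\ge0,\ b_ip_i^\star=0\}$, supplies a single $\nu^\star$ and a vector $\mathbf b^\star\ge0$, complementary slackness gives $b_i^\star=0$ on active channels, and substituting \eqref{eq:grad_p_L} yields the identity; your explicit normal-cone computation and the threshold identification $\nu^\star=\lambda/\alpha_p^\star$ only add detail. One minor wording point: Line~15 is the corrected extragradient step, and as the paper notes after Proposition~\ref{prop:kkt_consistency}, a fixed point of that two-step map need not satisfy \eqref{eq:projected_stationarity}, so your thresholding remark follows from the KKT hypothesis (equivalently $\mathbf p^\star=\Pi_{\Delta_P}(\mathbf p^\star+\alpha_p^\star\nabla_{\mathbf p}\mathcal L(\mathbf z^\star))$), not from anything Line~15 itself enforces.
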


\begin{proof}
The KKT condition for the constrained maximization in $\mathbf p$ implies the existence of dual variable $\nu^\star$ for the simplex constraint $\mathbf{1}^\mathsf{T}\mathbf p=P$ and nonnegative $b^\star$ for $\mathbf p\ge0$ such that $\nabla_{p_i}\mathcal L(\mathbf p^\star,\mathbf n^\star,\boldsymbol{\theta^\star})-\nu^\star+b_i^\star=0$.
If $p_i^\star>0$, complementary slackness yields $b_i^\star=0$. Substituting \eqref{eq:grad_p_L} gives
\begin{equation}
I_i'(\gamma_i^\star)
\frac{\beta_i}{\sigma_i+n_i^\star}
=
\nu^\star + (\mathbf A^\mathsf{T} \boldsymbol\theta^\star)_i,
\end{equation}
which proves the claim for every active channel $i\in\{1,\dots,m\}$ with $p_i^\star>0$.
\end{proof}

Under mercury/water-filling, the mutual-information objective is not guaranteed to be globally convex in $\mathbf n$. 
Even when $I_i(\cdot)$ is smooth, the mapping
\begin{equation}
n_i
\mapsto
I_i\!\left(\frac{\beta_i p_i}{\sigma_i+n_i}\right), \quad \forall i
\end{equation}
can be nonconvex in $n_i$, depending on the constellation and the SNR regime. Accordingly, convex--concave guarantees for this minimax problem do not apply in general.
\textcolor{black}{
We therefore distinguish projected first-order stationarity,
which characterizes KKT consistency of the AWF problem, from
local dynamical stability of the learned update map. In the
nonconvex mercury/water-filling regime, our analysis is limited
to these first-order and local dynamical properties.
}

\color{black}
\begin{assumption}[Local regularity and stability]
\label{assump:local_regularity}
Let $\mathcal T$ denote the learned projected extragradient
update map induced by
\eqref{eq:eg_half}--\eqref{eq:model_extragrad_map},
with the learned parameters fixed after training. Let
$\mathbf z^\star$ be a fixed point of $\mathcal T$.
Assume that there exists a neighborhood $\mathcal U$ of
$\mathbf z^\star$ such that:

\begin{enumerate}
    \item[(i)] the saddle-gradient field $G$ is
    $L_G$-Lipschitz continuous on $\mathcal U$; and

    \item[(ii)] the active sets of the simplex and
    nonnegative-orthant projections remain unchanged in
    $\mathcal U$, and the projected update map $\mathcal T$
    is continuously differentiable on $\mathcal U$ relative
    to the corresponding feasible active manifold; and

    \item[(iii)] the Jacobian of the learned update map at
    $\mathbf z^\star$ satisfies
    \begin{equation}
    \left\|
    J_{\mathcal T}(\mathbf z^\star)
    \right\|_2 < 1 .
    \label{eq:local_contraction}
    \end{equation}
\end{enumerate}
\end{assumption}

Assumption~\ref{assump:local_regularity} provides sufficient
local conditions for the stability analysis below.
Condition~(ii) restricts the analysis to a locally fixed active
manifold and therefore excludes projection switching points,
where simplex and nonnegative-orthant projections are
generally nondifferentiable. Condition~(iii) imposes local
contractivity on the trained update map.

\begin{theorem}[Conditional local convergence and computational complexity]
\label{thm:local_convergence}
Consider the learned AWF update map $\mathbf z^{k+1}
    =
    \mathcal T(\mathbf z^k).$
Under Assumption~\ref{assump:local_regularity}, there exist
constants $c\in(0,1)$ and $\delta>0$ such that, whenever $\|\mathbf z^0-\mathbf z^\star\|_2\leq\delta$,
the iterates remain in $\mathcal U$ and satisfy
\begin{equation}
    \|\mathbf z^{k+1}-\mathbf z^\star\|_2
    \leq
    c\,
    \|\mathbf z^k-\mathbf z^\star\|_2.
    \label{eq:local_linear_convergence}
\end{equation}
Hence, the learned projected extragradient dynamics converge
locally linearly to $\mathbf z^\star$ under the stated
conditions.

In addition, one inference forward pass of the model has
computational complexity
\begin{equation}
\mathcal O\bigl(L_{\rm att}md^2+Ed+T(m\log m+E)\bigr).
\end{equation}
and memory complexity
\begin{equation}
\mathcal O(md+E).
\end{equation}
\end{theorem}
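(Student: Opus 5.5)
The proof splits into two independent parts: a local contraction argument for the linear rate, and a step-by-step operation count for the complexity.

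\emph{Local convergence.} Work on the fixed active manifold $\mathcal M$ guaranteed by Assumption~\ref{assump:local_regularity}(ii). On $\mathcal U\cap\mathcal M$ the projections $\Pi_{\Delta_P}$, $\Pi_{\Delta_N}$, $\Pi_{\mathbb R_{\ge0}^S}$ act as fixed affine maps: restriction to the active support, subtraction of a common shift (the water level), and zeroing of inactive $\theta$-coordinates. Composed with the $L_G$-Lipschitz field $G$ (condition (i)) and the bounded clamped stepsizes, $\mathcal T$ is therefore $C^1$ on $\mathcal U$ relative to $\mathcal M$.

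Set $q=\|J_{\mathcal T}(\mathbf z^\star)\|_2<1$ and fix $c\in(q,1)$. By continuity of $J_{\mathcal T}$, choose $\delta>0$ with $B(\mathbf z^\star,\delta)\subset\mathcal U$ and $\|J_{\mathcal T}(\mathbf z)\|_2\le c$ on that ball; the ball is convex, so the segment from $\mathbf z^\star$ to any point in it stays inside. The mean-value inequality then gives
\[
\|\mathcal T(\mathbf z)-\mathbf z^\star\|_2=\|\mathcal T(\mathbf z)-\mathcal T(\mathbf z^\star)\|_2\le c\|\mathbf z-\mathbf z^\star\|_2 .
\]
An induction on $k$ shows the iterates stay in the ball, hence in $\mathcal U$, and yields \eqref{eq:local_linear_convergence}. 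Iterating gives $\|\mathbf z^k-\mathbf z^\star\|_2\le c^k\delta$, which is the claimed local linear convergence.

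\emph{Main obstacle.} The delicate point is making ``differentiable relative to the active manifold'' rigorous. I would handle it as follows.
\begin{itemize}
\item Interpret $J_{\mathcal T}$ as the Jacobian of the restriction of $\mathcal T$ to $\mathcal M$.
\item Note that iterates starting in $\mathcal M\cap B(\mathbf z^\star,\delta)$ remain on $\mathcal M$, since the active sets do not switch inside $\mathcal U$.
\item Run the mean-value argument within $\mathcal M$. Because $\mathcal M$ is affine, segments between its points stay in it.
\end{itemize}

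\emph{Complexity.} I count the cost of each stage of Algorithm~\ref{alg:awf_foundation}, with $E=\mathrm{nnz}(\mathbf A)$ and hidden width $d$, treating the latent count and the grid size $L$ as constants.
\begin{itemize}
\item \emph{Set encoder.} Perceiver cross-attention and the MLPs cost $O(md^2)$ per layer, so $L_{\rm att}$ layers cost $O(L_{\rm att}md^2)$; latent self-attention is independent of $m$.
\item \emph{Bipartite GNN.} Each message-passing stage touches every edge with a $d$-vector, giving $O(Ed)$. The per-node transforms $\psi$ cost $O(md^2)$ and are absorbed into the encoder term.
\item \emph{Initialization.} The softmax initializations cost $O(md)$.
\item \emph{Extragradient iterations.} Each of the $T$ iterations does constant-many gradient evaluations. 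The $I'$ lookups by interpolation cost $O(m)$, and $\mathbf A^\top\boldsymbol\theta$ and $\mathbf A\mathbf p$ cost $O(E)$. The two sort-based simplex projections cost $O(m\log m)$, and the orthant projection costs $O(S)\le O(E)$ (assuming no empty rows). This gives $O(T(m\log m+E))$.
\end{itemize}
Summing the stages gives the stated time bound. For memory, the channel embeddings take $O(md)$, the sparse $\mathbf A$ takes $O(E)$, and the constraint embeddings take $O(Sd)$, which I take to be absorbed under the bound $S\le E$ with $d$ treated as a constant or bounded in the same way. The iterates only need $O(m+S)$ and can be overwritten, so memory is $O(md+E)$.
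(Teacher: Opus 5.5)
Your proposal is correct and follows the paper's proof essentially step for step. Continuity of the Jacobian gives a convex neighborhood, relative to the fixed active manifold, on which $\|J_{\mathcal T}\|_2\le c<1$; the mean-value inequality together with $\mathcal T(\mathbf z^\star)=\mathbf z^\star$ then gives the contraction, and the complexity comes from the same stage-wise count of encoder, sparse message passing, and $T$ projected extragradient steps. One slip: $C^1$ regularity of $\mathcal T$ does not follow from Lipschitz continuity of $G$, but it is granted directly by Assumption~\ref{assump:local_regularity}(ii), so the argument is unaffected.
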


\begin{proof}
By Assumption~\ref{assump:local_regularity}(ii),
$\mathcal T$ is continuously differentiable in a neighborhood
of $\mathbf z^\star$ relative to the locally fixed active
manifold. Moreover, $\|J_{\mathcal T}(\mathbf z^\star)\|_2 < 1$
by Assumption~\ref{assump:local_regularity}(iii).
By continuity of the Jacobian, there exist
$c\in(0,1)$ and a sufficiently small convex neighborhood
$\mathcal U_0\subseteq\mathcal U$, relative to the locally
fixed active manifold, such that
\begin{equation}
\sup_{\mathbf z\in\mathcal U_0}
\|J_{\mathcal T}(\mathbf z)\|_2
\leq c.
\end{equation}
By the mean-value inequality on this local manifold,
$\mathcal T$ is contractive on $\mathcal U_0$, and hence
\begin{equation}
\|\mathcal T(\mathbf z)-\mathbf z^\star\|_2
\leq
c\|\mathbf z-\mathbf z^\star\|_2.
\end{equation}
The local linear convergence in
\eqref{eq:local_linear_convergence} follows.

For the computational complexity, the Perceiver-style encoder
processes $m$ channel tokens of dimension $d$ over
$L_{\rm att}$ layers in
$\mathcal O(L_{\rm att}md^2)$ operations. Sparse constraint
message passing requires $\mathcal O(Ed)$ operations, and each
projected extragradient step requires
$\mathcal  O(m\log m+E)$ operations. Summing over $T$ unrolled steps
gives the stated complexity. The corresponding memory
complexity is $\mathcal O(md+E)$.
\end{proof}

% [Modified] Make the concluding claim slightly more precise.
Proposition~\ref{prop:kkt_consistency} and
Theorem~\ref{thm:local_convergence} characterize complementary
properties of the learned AWF dynamics. The proposition links
projected first-order stationarity to the KKT system of the
underlying AWF problem, whereas the theorem establishes local
linear stability of the learned update map under
Assumption~\ref{assump:local_regularity}. For the generally nonconvex mercury/water-filling formulation,
these results provide first-order and local dynamical
characterizations under the stated assumptions.
\color{black}

\section{Experiments}

We evaluate the foundation model on the discrete-constellation AWF problem under varying channel dimensions and constraints.

\color{black}
\subsection{Problem Setting and Evaluation Protocol}
\color{black}
In the channels under mercury/water-filling, the mutual information depends on discrete constellations rather than the Gaussian logarithmic form. We use 16QAM and 64QAM during training, and obtain $I'(\gamma)=\mathrm{mmse}(\gamma)$ from precomputed interpolation tables on a logarithmically spaced SNR grid over approximately $[-10,30]$ dB. Channel gains and noise powers are sampled as
$\log\beta_i\sim\mathcal N(0,0.6^2)$ and
$\log\sigma_i\sim\mathcal N(0,0.25^2)$.
The per-channel transmit and interference budgets are sampled as
$\bar P,\bar N\sim\mathcal U(0.05,0.5)$, and the total budgets are set to
$P=m\bar P$ and $N=m\bar N$. Linear constraints are generated from sparse nonnegative matrices
$\mathbf A\in\mathbb R^{S\times m}$, where
$S=\max\{1,\lfloor \rho m\rceil\}$ and
$\rho\sim\mathcal U(0.05,0.30)$.
Each row of $\mathbf A$ is normalized to unit sum, and $\hat p$ is generated by adding positive random slack to $\mathbf{Ap}$ for a sampled feasible power allocation.
Training uses $m\in\{32,64,128,256,512\}$, while evaluation additionally includes $m=16$ and the unseen large-scale case $m=1024$.
For modulation generalization, 256QAM is used only as an unseen held-out evaluation format.

The model is trained with Adam using learning rate $10^{-4}$ and batch size $16$, with $T=64$ unrolled projected extragradient updates.
Training instances are generated online. Runtime is measured after warm-up over repeated runs on a single NVIDIA GeForce RTX 3050 GPU.

\color{black}
We evaluate the pretrained AWF foundation model using the
normalized objective
\begin{equation}
J(\mathbf p,\mathbf n)
=
\frac{1}{m}
\sum_{i=1}^{m}
I_i(\gamma_i),
\end{equation}
the average inequality violation
\begin{equation}
V(\mathbf p)
=
\frac{1}{S}
\sum_{j=1}^{S}
\max
((\mathbf A\mathbf p)_j-\hat p_j
,0),
\end{equation}
and a projected first-order stationarity residual
\begin{equation}
R_{\rm stat}(\mathbf z)
=
\frac{
\left\|
\mathbf z-
\Pi_{\mathcal X}
\!\left(
\mathbf z+\eta_{\rm eval}\mathbf G(\mathbf z)
\right)
\right\|_2
}{
\eta_{\rm eval}\sqrt{2m+S}
},
\qquad
\eta_{\rm eval}=0.01,
\label{eq:eval_rstat}
\end{equation}
where $\mathbf z=(\mathbf p,\mathbf n,\boldsymbol{\theta})$ and $\mathbf G$ is the saddle-gradient field defined in Section VI. 
The objective value is reported as a performance measure,
whereas first-order solution quality is characterized by
$R_{\rm stat}$ together with constraint feasibility.

\color{black}
For runtime evaluation, we further use a per-instance
matched-quality protocol based jointly on projected
stationarity and constraint feasibility. For each test instance,
let $R_{\rm stat}^{\rm AWF}$ and $V_{\rm AWF}$ denote the
quality attained by the fixed-depth pretrained AWF model.
The matching iteration of Mirror-Prox is defined as
\begin{equation}
k_{\rm match}
=
\min
\left\{
k:
\begin{array}{l}
R_{\rm stat}^{\rm MP}(k)
\le
R_{\rm stat}^{\rm AWF},\\[1mm]
V_{\rm MP}(k)
\le
\max\{V_{\rm AWF},10^{-3}\}
\end{array}
\right\}, 
\label{eq:matched_quality}
\end{equation}
where $10^{-3}$ serves as a numerical feasibility tolerance. Thus, the runtime comparison is performed at matched
first-order stationarity and constraint feasibility on each
instance. If no such iterate is reached within
$K_{\max}=3000$, the instance is reported as unmatched.
\color{black}

\color{black}
For each successfully matched instance $i$, the corresponding
runtime speedup is
\begin{equation}
{U}_i
=
\frac{
t_{{\rm MP},i}(k_{{\rm match},i})
}{
t_{{\rm AWF},i}
}.
\end{equation}
Speedup statistics are reported over the matched instances,
together with the corresponding matching rate.
\color{black}
\color{black}
The matching iteration is identified in a separate evaluation
pass. Mirror-Prox is then rerun for the fixed
$k_{\rm match}$ iterations for timing, without evaluating
stationarity metrics inside the timed region.
\color{black}

\subsection{Baseline Methods}

As a baseline we implement the Mirror-Prox method \cite{nemirovski2004prox}, a standard extragradient algorithm for the minimax problems and variational inequalities. 
\color{black}
For reference, the standard projected Mirror-Prox iteration
can be written as

\begin{equation}
z^{k+\frac12} =
\Pi_{\mathcal{X}}\left(z^k + \eta\,G(z^k)\right),
\end{equation}

\begin{equation}
z^{k+1} =
\Pi_{\mathcal{X}}\left(z^k + \eta\,G(z^{k+\frac12})\right),
\end{equation}
where $\mathbf z=(\mathbf p,\mathbf n,\boldsymbol{\theta})$
and $\mathbf G$ is the saddle-gradient field defined in Section~VI. 
Our numerical implementation uses fixed positive scalar stepsizes for the three variable blocks,
$\alpha_p=0.01$, $\alpha_n=0.03$, and
$\alpha_\theta=0.05$, applied in both Mirror–Prox steps.
% The interference block is parameterized by an unconstrained
% auxiliary variable followed by softplus normalization to enforce
% $\mathbf1^\mathsf{T}\mathbf n=N$.
\color{black}

\color{black}
As a low-complexity learning-based baseline, we additionally
consider Direct-GNN. It uses the same channel and constraint
representations, Perceiver encoder, and constraint-aware
message-passing backbone as the proposed AWF foundation model,
but replaces the $T$-step learned primal--dual dynamics with
a one-shot prediction head that directly produces the resource
allocation variables. The resulting model has a parameter
count comparable to that of the full AWF model, allowing us to distinguish the effect
of the learned iterative structure from that of neural
representation capacity alone. Direct-GNN is trained from
scratch using the same instance generator, training
distribution, and training protocol as the full model.
Its detailed comparison is reported together with the
architectural ablations in Section~VII-G.
\color{black}

\color{black}
\subsection{Experiment I: Size Generalization}

\begin{table}[t]
\color{black}
\centering
\caption{
Size generalization across channel dimensions.
}
\label{tab:size_generalization}
\setlength{\tabcolsep}{3.4pt}
\renewcommand{\arraystretch}{0.88}
\begin{tabular}{c l c c c}
\toprule
$m$ & Method & $J$ & $V$ & $R_{\rm stat}$ \\
\midrule
16
& Direct-GNN  & 0.4810 & $4.41{\times}10^{-3}$ & 0.2972 \\
& AWF Model   & 0.4786 & $1.91{\times}10^{-5}$ & 0.0357 \\
& Mirror-Prox & 0.4797 & $4.61{\times}10^{-4}$ & 0.0290 \\
\midrule
32
& Direct-GNN  & 0.5423 & $5.48{\times}10^{-3}$ & 0.3220 \\
& AWF Model   & 0.5479 & $7.77{\times}10^{-5}$ & 0.0521 \\
& Mirror-Prox & 0.5498 & $5.37{\times}10^{-4}$ & 0.0356 \\
\midrule
64
& Direct-GNN  & 0.5075 & $9.71{\times}10^{-3}$ & 0.3122 \\
& AWF Model   & 0.5102 & $3.09{\times}10^{-4}$ & 0.0469 \\
& Mirror-Prox & 0.5113 & $9.33{\times}10^{-4}$ & 0.0352 \\
\midrule
128
& Direct-GNN  & 0.4927 & $7.86{\times}10^{-3}$ & 0.3409 \\
& AWF Model   & 0.5154 & $7.37{\times}10^{-4}$ & 0.0562 \\
& Mirror-Prox & 0.5172 & $1.00{\times}10^{-3}$ & 0.0513 \\
\midrule
256
& Direct-GNN  & 0.4772 & $7.72{\times}10^{-3}$ & 0.3131 \\
& AWF Model   & 0.4973 & $1.24{\times}10^{-3}$ & 0.0452 \\
& Mirror-Prox & 0.4991 & $1.35{\times}10^{-3}$ & 0.0433 \\
\midrule
512
& Direct-GNN  & 0.4926 & $6.71{\times}10^{-3}$ & 0.2837 \\
& AWF Model   & 0.5023 & $1.38{\times}10^{-3}$ & 0.0303 \\
& Mirror-Prox & 0.5032 & $1.50{\times}10^{-3}$ & 0.0242 \\
\midrule
1024
& Direct-GNN  & 0.5011 & $4.87{\times}10^{-3}$ & 0.3132 \\
& AWF Model   & 0.5310 & $2.09{\times}10^{-3}$ & 0.0489 \\
& Mirror-Prox & 0.5325 & $1.88{\times}10^{-3}$ & 0.0535 \\
\bottomrule
\end{tabular}
\end{table}
We first evaluate generalization across channel dimensions.
For each dimension, predefined test instances are generated
from the same channel, modulation, and constraint distributions
as in training. The pretrained AWF foundation model is kept
fixed across all dimensions without instance-specific
fine-tuning. Training uses
$m\in\{32,64,128,256,512\}$, while evaluation additionally
includes $m=16$ and the unseen large-scale case $m=1024$. In addition to Mirror-Prox, we include Direct-GNN as a
learning-based baseline. Direct-GNN uses a comparable neural
representation backbone but directly predicts the resource
allocations in one forward pass, without the learned
primal--dual rollout. All methods are evaluated using the
normalized objective $J$, average inequality violation $V$, and
projected first-order stationarity residual $R_{\rm stat}$
defined in Section~VII-A.

Table~\ref{tab:size_generalization} shows that the AWF model
maintains substantially smaller stationarity residuals than
Direct-GNN over all tested dimensions while preserving low
constraint violation. Its average $R_{\rm stat}$ remains
between $0.0303$ and $0.0562$ over $m=16$--$1024$, compared
with $0.2837$--$0.3409$ for Direct-GNN, while the objective
values remain close to those of Mirror-Prox. Although
Direct-GNN has lower raw forward latency, its markedly larger
stationarity and feasibility errors show that raw latency alone
does not represent comparable solution quality. This supports
the role of the learned primal--dual rollout in maintaining
first-order quality across changing dimensions.
Fig.~\ref{fig:matched_size_runtime} reports the matched-quality
runtime comparison defined in Section~VII-A. Mirror-Prox
reaches the AWF stationarity--feasibility target within
$K_{\max}=3000$ iterations for 142 of 168 instances
($84.5\%$), over which AWF achieves a median speedup of
$13.84\times$. At unseen $m=1024$, 16 of 24 cases are
matched, with a median speedup of $23.44\times$. These results
indicate reduced time to comparable first-order quality rather
than uniformly superior final stationarity.
\color{black}

\begin{figure}
    \centering
    \color{black}
    \includegraphics[width=1\linewidth]{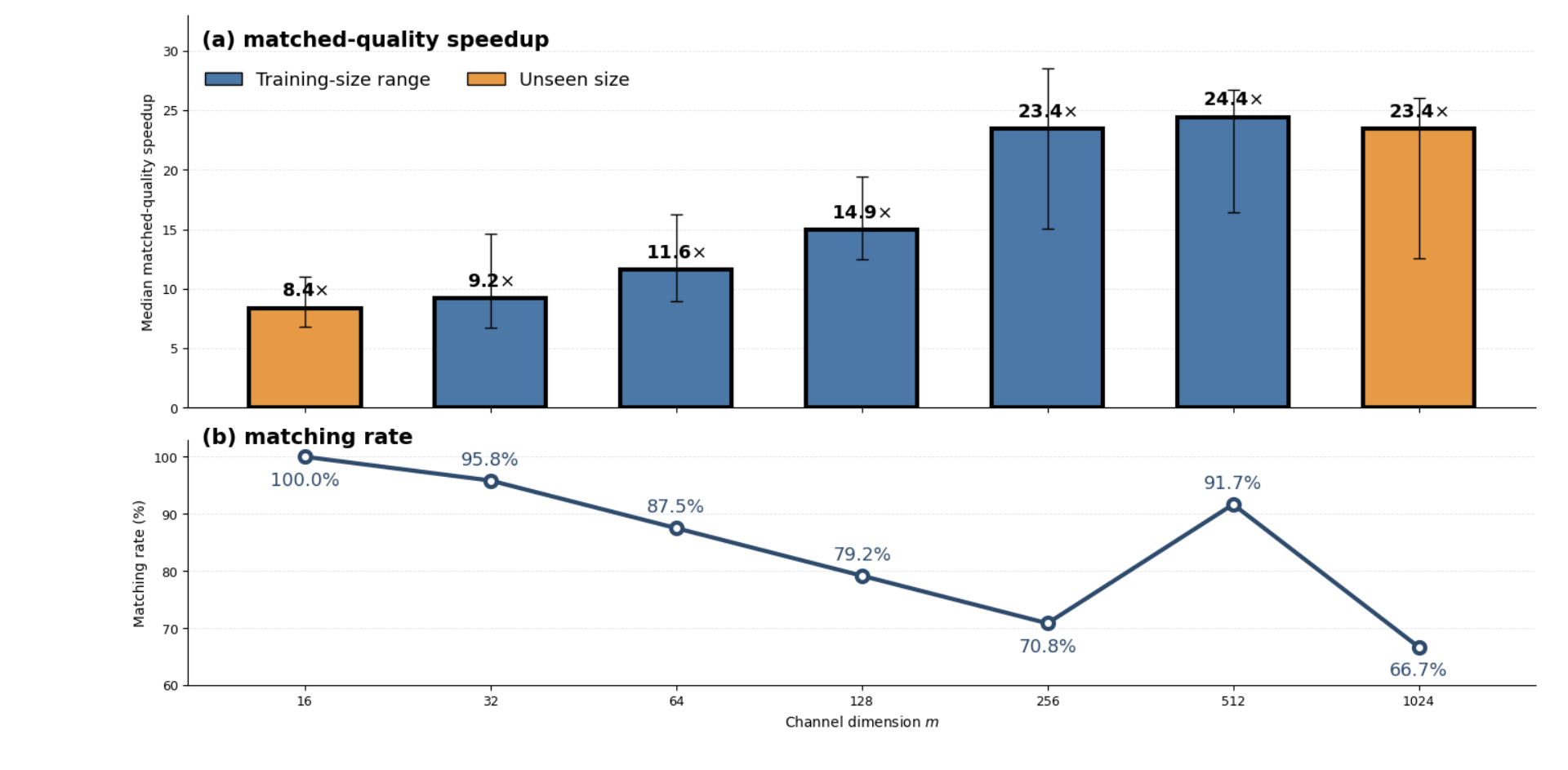}
\caption{
Matched-quality runtime comparison between the pretrained AWF
model and Mirror-Prox across channel dimensions. Panel (a)
shows the median speedup with interquartile-range error bars,
and Panel (b) shows the corresponding matching rate.
}
    \label{fig:matched_size_runtime}
\end{figure}

\color{black}
\begin{figure}
    \centering
    \color{black}
    \includegraphics[width=0.95\linewidth]{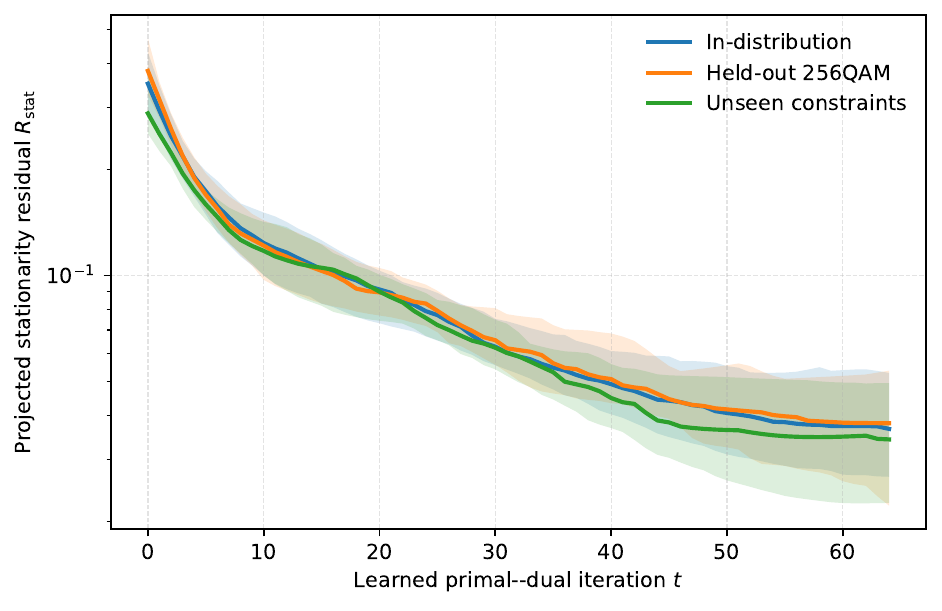}
    \caption{
Evolution of the projected-stationarity residual over the
learned primal--dual rollout. Curves and shaded regions denote
the median and interquartile range, respectively, across
representative in-distribution, held-out-modulation, and
unseen-constraint test instances.
}
    \label{fig:rollout_stationarity}
\end{figure}

Fig.~\ref{fig:rollout_stationarity} provides a diagnostic view
of the learned primal--dual dynamics. Across representative
in-distribution, held-out 256QAM, and unseen-constraint
settings, the median projected-stationarity residual is reduced
by approximately one order of magnitude over the fixed-depth
rollout. The similar trajectories across the three settings
indicate that the central stationarity-improving behavior of
the learned updates persists beyond the training distribution.

\color{black}
\color{black}
\subsection{Experiment II: Modulation Generalization}

\begin{table}[t]
\centering
\color{black}
\caption{
Generalization across modulation formats at $m=1024$.
The 256QAM setting is held out during training.
}
\label{tab:modulation_generalization}
\setlength{\tabcolsep}{3.2pt}
\renewcommand{\arraystretch}{0.88}
\begin{tabular}{l l c c c}
\toprule
Modulation & Method & $J$ & $V$ & $R_{\rm stat}$ \\
\midrule
16QAM
& Direct-GNN
& 0.5660 & $6.23{\times}10^{-3}$ & 0.3128 \\
& AWF Model
& 0.5880 & $1.55{\times}10^{-3}$ & 0.0554 \\
& Mirror-Prox
& 0.5884 & $1.62{\times}10^{-3}$ & 0.0494 \\
\midrule

64QAM
& Direct-GNN
& 0.4763 & $1.66{\times}10^{-3}$ & 0.3381 \\
& AWF Model
& 0.5106 & $1.55{\times}10^{-3}$ & 0.0429 \\
& Mirror-Prox
& 0.5109 & $1.59{\times}10^{-3}$ & 0.0431 \\
\midrule

Mixed
& Direct-GNN
& 0.5511 & $1.17{\times}10^{-3}$ & 0.2945 \\
& AWF Model
& 0.5747 & $9.58{\times}10^{-4}$ & 0.0352 \\
& Mirror-Prox
& 0.5767 & $9.25{\times}10^{-4}$ & 0.0494 \\
\midrule

256QAM
& Direct-GNN
& 0.4380 & $4.20{\times}10^{-3}$ & 0.3524 \\
& AWF Model
& 0.4640 & $1.56{\times}10^{-3}$ & 0.0440 \\
& Mirror-Prox
& 0.4660 & $1.60{\times}10^{-3}$ & 0.0516 \\
\bottomrule
\end{tabular}
\end{table}

We next evaluate generalization across modulation formats.
Discrete constellations induce different mutual-information and
MMSE characteristics and hence different AWF dynamics. At evaluation, we consider 16QAM, 64QAM, mixed
16QAM/64QAM, and held-out 256QAM at the unseen large-scale
dimension $m=1024$. The same pretrained AWF foundation model
is used for all settings without retraining or
instance-specific fine-tuning. Direct-GNN is again included to
test whether a comparable one-shot neural representation can
provide the same modulation generalization without the learned
primal--dual rollout.

As shown in Table~\ref{tab:modulation_generalization}, the AWF
model attains objective values close to Mirror-Prox while
maintaining low inequality violation and stationarity residuals
across all four modulation settings. Most notably, on held-out
256QAM, Direct-GNN gives $R_{\rm stat}=0.3524$, whereas the
AWF model achieves $0.0440$ and Mirror-Prox $0.0516$.
The AWF residuals for 16QAM, 64QAM, and mixed modulation also
remain in a similar first-order quality regime to Mirror-Prox.
These results indicate transfer across modulation-dependent
mutual-information characteristics rather than specialization
to the modulation formats used during training.
\color{black}

\color{black}
\subsection{Experiment III: Constraint Generalization}

\begin{table}[t]
\centering
\color{black}
\caption{
Generalization across constraint structures. Group, prefix, and dense constraints are unseen during training.}
\label{tab:constraint_generalization}
\setlength{\tabcolsep}{3.2pt}
\renewcommand{\arraystretch}{0.88}
\begin{tabular}{l l c c c}
\toprule
Constraint & Method & $J$ & $V$ & $R_{\rm stat}$ \\
\midrule

Sparse
& Direct-GNN
& 0.4959 & $3.18{\times}10^{-3}$ & 0.3347 \\
& AWF Model
& 0.5262 & $2.37{\times}10^{-3}$ & 0.0386 \\
& Mirror-Prox
& 0.5277 & $2.16{\times}10^{-3}$ & 0.0472 \\
\midrule

Group
& Direct-GNN
& 0.4324 & $5.43{\times}10^{-3}$ & 0.3092 \\
& AWF Model
& 0.4355 & $1.38{\times}10^{-3}$ & 0.0446 \\
& Mirror-Prox
& 0.4391 & $1.57{\times}10^{-3}$ & 0.0220 \\
\midrule

Prefix
& Direct-GNN
& 0.5717 & $7.85{\times}10^{-4}$ & 0.2847 \\
& AWF Model
& 0.5978 & $1.41{\times}10^{-4}$ & 0.0375 \\
& Mirror-Prox
& 0.5976 & $3.25{\times}10^{-4}$ & 0.0315 \\
\midrule

Dense
& Direct-GNN
& 0.5026 & $4.99{\times}10^{-6}$ & 0.3078 \\
& AWF Model
& 0.5276 & $1.40{\times}10^{-5}$ & 0.0307 \\
& Mirror-Prox
& 0.5276 & $1.50{\times}10^{-5}$ & 0.0307 \\
\bottomrule
\end{tabular}
\end{table}

We further evaluate whether a single pretrained AWF foundation
model transfers across different linear-constraint structures.
Training uses random sparse nonnegative constraint matrices,
while evaluation additionally considers unseen group, prefix,
and dense constraints. All tests use $m=1024$, and the same
pretrained model is kept frozen. Direct-GNN is again included to
separate the effect of neural representation from that of the
learned primal--dual dynamics. Table~\ref{tab:constraint_generalization} shows that the AWF
model maintains low $R_{\rm stat}$ across all four constraint
families, ranging from $0.0307$ to $0.0446$, compared with
$0.2847$--$0.3347$ for Direct-GNN, while attaining objective
values close to Mirror-Prox. These results support reuse of the
same frozen AWF dynamics under constraint topologies not seen
during training.

\color{black}
Across Experiments I--III, Mirror-Prox matches the AWF
stationarity--feasibility target within $K_{\max}=3000$
iterations in 332 of 392 cases ($84.7\%$). Over these matched
cases, AWF achieves a median speedup of $12.76\times$, with an
interquartile range (IQR) of
$8.87\times$--$21.14\times$; unmatched cases are excluded
from the speedup statistics.
\color{black}

\color{black}
\subsection{Experiment IV: Generalization to Weighted Objectives}

We further test whether the learned AWF dynamics can be reused
when the channel-wise utility objective is reweighted. Starting
from the mercury/water-filling max--min problem in (\ref{eq:spatial-awf-mercury}), we
replace the original objective by $\max_{\mathbf p\geq 0,\,
\mathbf 1^{\mathsf T}\mathbf p=P,\,
\mathbf A\mathbf p\leq\widehat{\mathbf p}}
\min_{\substack{\mathbf n\geq 0,\,
\mathbf 1^{\mathsf T}\mathbf n=N}}
J_{\mathbf w}(\mathbf p,\mathbf n)$, $
J_{\mathbf w}(\mathbf p,\mathbf n)
= \frac{1}{m}\sum_{i=1}^{m} w_i I_i(\gamma_i)$,
where the positive weights $\mathbf w$ are unseen during pretraining. Thus, the feasible sets and max--min structure of (\ref{eq:spatial-awf-mercury}) remain unchanged, while the channel-wise utilities are
reweighted.

We test three held-out weighting patterns over
$m\in\{32,64,128,256,512,1024\}$ with 10 independent instances
per size, yielding 180 weighted settings. The projected-stationarity residual is evaluated using the
corresponding weighted objective. As shown in
Table~\ref{tab:weighted_objective_generalization}, moderate
weighting shifts retain low residuals for most instances, whereas
the wider $\{0.25,1.75\}$ shift exhibits a heavier tail. Overall, the results indicate that the pretrained model maintains favorable first-order stationarity across a range of unseen weighted objectives, although stronger weighting shifts lead to a heavier residual tail. These results provide further evidence of structured zero-shot generalization across weighted objective variants. 

\begin{table}[t]
\centering
\color{black}
\caption{Zero-shot reuse under unseen weighted AWF objectives.}
\label{tab:weighted_objective_generalization}
\setlength{\tabcolsep}{3.2pt}
\renewcommand{\arraystretch}{0.94}
\begin{tabular}{l l c c c}
\toprule
Pattern & Weights
& Mean $R_{\rm stat}$
& Med./P90
& $R_{\rm stat}\!\leq\!0.05$ \\
\midrule
Random positive
& $\mathcal{U}[0.5,1.5]$
& 0.0274 & 0.0208/0.0489 & 54/60 \\
Grouped
& $\{0.5,1.5\}$
& 0.0355 & 0.0294/0.0648 & 50/60 \\
Wide two-level
& $\{0.25,1.75\}$
& 0.0521 & 0.0458/0.0863 & 35/60 \\
\bottomrule
\end{tabular}
\end{table}
\color{black}

\color{black}
\subsection{Learning Baseline and Architectural Ablations}
\label{subsec:ablation}

We conduct controlled experiments to identify the
contributions of the learned primal--dual rollout,
constraint-aware graph representation, and
distribution-dependent conditioning. All variants are trained
from scratch using the same data generator, training
distribution, objective, and training protocol, and are
evaluated on the same predefined instances.

Table~\ref{tab:ablation} shows that Direct-GNN, which removes
the learned primal--dual rollout and predicts the allocation in
a single forward pass, produces the largest degradation. Its
projected-stationarity residual is approximately $6.8\times$,
$7.3\times$, and $7.7\times$ larger than that of the full AWF
model for the in-distribution, held-out 256QAM, and
unseen-constraint evaluations, respectively, while its average
inequality violation is about $6.2\times$ larger. Thus, a
comparable neural representation alone does not provide the
same first-order quality as the learned dynamics. Removing the constraint-aware GNN mainly affects feasibility:
the average inequality violation increases from
$8.84\times10^{-4}$ to $1.71\times10^{-3}$, while
stationarity remains comparable. Removing
distribution-dependent conditioning leads to a smaller but
consistent degradation, including for held-out 256QAM.
Together, these ablations indicate complementary roles of the
learned update dynamics, constraint representation, and
modulation-dependent conditioning.

\begin{table}[t]
\centering
\color{black}
\caption{
Learning baseline and component ablations.
}
\label{tab:ablation}
\setlength{\tabcolsep}{2pt}
\renewcommand{\arraystretch}{0.88}
\begin{tabular}{l c c c c c}
\toprule
Method &
In-dist. &
256QAM &
Unseen const. &
$J$ &
$V$ \\
& $R_{\rm stat}$ &
$R_{\rm stat}$ &
$R_{\rm stat}$ & & \\
\midrule
Direct-GNN
& 0.3144 & 0.3480 & 0.3037
& 0.5058 & $5.44{\times}10^{-3}$ \\

w/o Constraint GNN
& 0.0434 & 0.0450 & 0.0356
& 0.5217 & $1.71{\times}10^{-3}$ \\

w/o Dist. Conditioning
& 0.0471 & 0.0490 & 0.0406
& 0.5214 & $9.11{\times}10^{-4}$ \\

Full AWF
& 0.0461 & 0.0478 & 0.0395
& 0.5212 & $8.84{\times}10^{-4}$ \\
\bottomrule
\end{tabular}
\end{table}

\noindent\textit{Graph architecture choice.}
To examine the role of explicit bipartite message passing, we
replace the constraint GNN with a masked bipartite Graph
Transformer while keeping the remaining AWF architecture and
learned primal--dual dynamics unchanged. The two variants have
comparable capacity and use the
same training and evaluation instances.

Across all generalization instances, the two architectures
achieve nearly identical objective and stationarity quality:
the bipartite GNN gives $J=0.5212$ and
$R_{\rm stat}=0.0440$, compared with $J=0.5214$ and
$R_{\rm stat}=0.0445$ for the Graph Transformer. The main
difference is feasibility, with average inequality violation
increasing from $8.84\times10^{-4}$ to
$1.43\times10^{-3}$ under masked attention. Their scaling behavior also differs: latency is similar for
$m\leq64$, while at $m=1024$ the median latency is 242.2 ms
for the bipartite GNN and 271.2 ms for the Graph Transformer.
We therefore retain bipartite message passing for the sparse
constraint graphs considered here due to its favorable
feasibility--scalability balance, while both representations
remain viable.

\begin{table}[t]
\centering
\color{black}
\caption{
Graph representation comparison, with runtime reported at $m=1024$.
}
\label{tab:graph_architecture}
\setlength{\tabcolsep}{3.2pt}
\renewcommand{\arraystretch}{0.95}
\begin{tabular}{l c c c c}
\toprule
Graph module &
$J$ &
$V$ &
$R_{\rm stat}$ &
Time$_{1024}$ (ms) \\
\midrule
Bipartite GNN
& 0.5212
& $8.84{\times}10^{-4}$
& 0.0440
& 242.2 \\

Graph Transformer
& 0.5214
& $1.43{\times}10^{-3}$
& 0.0445
& 271.2 \\
\bottomrule
\end{tabular}
\end{table}

\color{black}

\color{black}
\subsection{NTN-Inspired Channel and Interference Sensitivity}
\label{subsec:ntn_sensitivity}

Together with the constraint-generalization experiment, we
evaluate the frozen AWF foundation model under NTN-inspired
large-scale channel variations and adversarial interference
budgets. This experiment is intended as a controlled
snapshot-level stress test rather than a complete standardized
link-level or coexistence simulation.

Specifically, for $m=64$, each base channel realization is
perturbed using a snapshot-level large-scale factor motivated by
the S-band LOS shadow-fading setting in 3GPP TR~38.811~\cite{3GPP38811}.
For each base snapshot, we independently sample
$X_{\rm SF}\sim\mathcal{N}(0,0.72^2)\ {\rm dB}$,
and define $s_{\rm NTN}=10^{-X_{\rm SF}/10}$, $
\widetilde{\boldsymbol{\beta}}
=
s_{\rm NTN}\boldsymbol{\beta}.$
The same factor $s_{\rm NTN}$ is applied to all channel-gain
entries within a given snapshot, and is kept fixed when sweeping
the interference budget $N/m$. Across the 20 frozen base
snapshots, the realized values of $s_{\rm NTN}$ range from
$0.665$ to $1.370$, with a median of $1.013$.
We independently vary the aggregate adversarial interference
budget according to $\frac{N}{m}\in\{0.05,\,0.15,\,0.30,\,0.50\}$.
For each base instance,
$\widetilde{\boldsymbol{\beta}}$, $\boldsymbol{\sigma}$, $P$,
$\mathbf A$, and $\hat{\mathbf p}$ are fixed across operating
points, while only $N$ is varied. The learned model remains
completely frozen, and Mirror-Prox uses the same configuration
and projected-stationarity metric as in the main experiments.

\begin{table}[t]
\centering
\color{black}
\caption{
NTN-inspired channel and interference sensitivity over
20 paired base instances.
}
\label{tab:ntn_sensitivity}
\resizebox{\columnwidth}{!}{
\begin{tabular}{c|ccc|cccc}
\hline
& \multicolumn{3}{c|}{Learned AWF}
& \multicolumn{4}{c}{Mirror-Prox} \\
$N/m$
& $J$
& $V$
& $R_{\rm stat}$
& $J$
& $V$
& $R_{\rm stat}$
& Match \\
\hline
0.05
& 0.6743
& $4.12{\times}10^{-4}$
& 0.1423
& 0.6759
& $9.96{\times}10^{-4}$
& 0.0932
& 90\% \\

0.15
& 0.5696
& $3.09{\times}10^{-4}$
& 0.0561
& 0.5729
& $8.65{\times}10^{-4}$
& 0.0436
& 90\% \\

0.30
& 0.4718
& $2.50{\times}10^{-4}$
& 0.0361
& 0.4749
& $9.32{\times}10^{-4}$
& 0.0236
& 100\% \\

0.50
& 0.3879
& $2.87{\times}10^{-4}$
& 0.0326
& 0.3915
& $7.98{\times}10^{-4}$
& 0.0233
& 100\% \\
\hline
\end{tabular}}
\end{table}

\color{black}
Table~\ref{tab:ntn_sensitivity} shows a physically consistent
sensitivity trend. As the adversarial interference budget
increases from $N/m=0.05$ to $0.50$, the objective decreases
for both methods, as expected from the max--min structure of
AWF, while $R_{\rm stat}$ of the AWF model decreases from
$0.1423$ to $0.0326$. The weakest-interference setting is the
most challenging for the fixed-depth learned dynamics, with the
interference-side residual being dominant. Throughout the sweep,
the AWF model maintains an average inequality violation below
$4.2\times10^{-4}$, while Mirror-Prox reaches the corresponding
stationarity--feasibility target on $90\%$, $90\%$, $100\%$,
and $100\%$ of the instances, respectively. Overall, the AWF model retains low feasibility error and predictable
snapshot-level sensitivity under the considered NTN-inspired
channel and interference variations without NTN-specific
retraining.
\color{black}

\section{Conclusion}

\color{black}
This paper has introduced AWF as a unified framework for competitive spectrum and spatial resource allocation. Gaussian AWF admits a convex--concave formulation with a unique saddle point, whereas practical discrete constellations lead to generally nonconvex mercury/water-filling formulations. To address the latter, we
have developed a domain-specific wireless foundation model that learns AWF dynamics across varying channel dimensions, constraint structures, and
modulation distributions. Theoretical analysis has established projected-stationarity/KKT consistency and conditional local convergence under explicit
regularity and stability conditions. Experiments have demonstrated generalization across problem sizes, modulation formats, constraint structures,
and held-out channel-weighted objectives, together with snapshot-level sensitivity under NTN-inspired channel and interference variations. These results show that the same pretrained model can be reused across these variations without instance-specific retraining. Future work of interest includes broader transfer and adaptation across distinct wireless resource-allocation task families, multi-agent extensions, and temporally coupled settings
involving time-varying channels and network-level coordination in large-scale LEO systems.

\color{black}
% This paper has proposed the AWF as a unified framework for competitive spectrum and spatial resource allocation. 
% While Gaussian adversarial water-filling gives rise to a strongly convex--concave game, practical discrete constellations lead to generally nonconvex mercury/water-filling formulations. 
% To address this challenge, we have proposed a wireless foundation model that learns water level search dynamics across varying channel dimensions, constraints, and modulation distributions. 
% The architecture integrates permutation-invariant channel representations, constraint-aware message passing, and global latent variables that reflect the optimal adversarial water level. 
% Theoretical analysis characterized conditional KKT consistency and local convergence under regularity and contraction conditions. Experiments demonstrated empirical generalization across problem sizes, constraints, and modulation formats, while achieving significant runtime improvements over iterative baselines. Interesting problems for future work include extending this model to multi-agent wireless optimization problems, as well as incorporating additional system dynamics such as time-varying channels and network-level coordination in large-scale LEO deployments.

% \section*{Acknowledgments}
% This work was supported in part by the Singapore Ministry of Education Academic Research Fund RG91/22 and MOE-T2EP20224-0009. The work of H. V. Poor was supported by an Innovation Grant from Princeton NextG.

\bibliographystyle{IEEEtran}
\bibliography{ref}

\appendices

\vfill

\end{document}